\newcommand\nw{\bar w} 
\newcommand\nz{\zeta} 
\newcommand{\angurl}{\scriptsize \url{http://www.robots.ox.ac.uk/~fwood/anglican}}
\newcommand{\myurl}{\scriptsize \url{https://bitbucket.org/twgr/ipmcmc}}
\begin{document} 

\newcommand{\coverTitle}{Interacting Particle Markov Chain Monte Carlo}
\newcommand{\coverAuthors}{Tom Rainforth*, Christian A. Naesseth*, Fredrik Lindsten, Brooks Paige, Jan-Willem van de Meent, Arnaud Doucet and Frank Wood}
\newcommand{\coverYear}{2016}

\begin{titlepage}
\begin{center}
{\large \em Technical report}

\vspace*{2.5cm}
%
{\Huge \bfseries \coverTitle  \\[0.4cm]}

%
{\Large \coverAuthors\\ {\normalsize * equal contribution} \\[2cm]}

\renewcommand\labelitemi{\color{red}\large$\bullet$}
\begin{itemize}
\item {\Large \textbf{Please cite this version:}} \\[0.4cm]
\large
\bibentry{rainforth-icml-2016}

\end{itemize}
\vfill

\setlength{\parfillskip}{0pt plus 1fil}

\begin{abstract}
We introduce \emph{interacting particle Markov chain Monte Carlo} (\ipmcmc), a \pmcmc method based on an interacting pool of standard and conditional sequential Monte Carlo samplers. Like related methods, \ipmcmc is a Markov chain Monte Carlo sampler on an extended space. We present empirical results that show significant improvements in mixing rates relative to both non-interacting \pmcmc samplers, and a single \pmcmc sampler with an equivalent memory and computational budget. 
An additional advantage of the \ipmcmc method is that it is suitable for distributed and multi-core architectures.  
\end{abstract}

\begin{keywords}sequential Monte Carlo, Markov chain Monte Carlo, particle Markov chain Monte Carlo, parallelisation\end{keywords}

\vfill
\end{center}
\end{titlepage}

\title{Interacting Particle Markov Chain Monte Carlo}

\author{\name Tom Rainforth* \email twgr@robots.ox.ac.uk \\
        \addr Department of Engineering Science, University of Oxford
        \AND
        \name Christian A. Naesseth* \email christian.a.naesseth@liu.se \\
        \addr Department of Electrical Engineering, Link\"oping University
        \AND
        \name Fredrik Lindsten \email fredrik.lindsten@it.uu.se \\
        \addr Department of Information Technology, Uppsala University
        \AND
        \name Brooks Paige \email brooks@robots.ox.ac.uk \\
        \addr Department of Engineering Science, University of Oxford
        \AND
        \name Jan-Willem van de Meent \email jwvdm@robots.ox.ac.uk \\
        \addr Department of Engineering Science, University of Oxford
        \AND
        \name Arnaud Doucet \email doucet@stats.ox.ac.uk \\
        \addr Department of Statistics, University of Oxford
        \AND
        \name Frank Wood \email fwood@robots.ox.ac.uk \\
        \addr Department of Engineering Science, University of Oxford
        \\
        \emph{* equal contribution}
        }
\editor{}

\maketitle

\begin{abstract}
We introduce \emph{interacting particle Markov chain Monte Carlo} (\ipmcmc), a \pmcmc method based on an interacting pool of standard and conditional sequential Monte Carlo samplers. Like related methods, \ipmcmc is a Markov chain Monte Carlo sampler on an extended space. We present empirical results that show significant improvements in mixing rates relative to both non-interacting \pmcmc samplers, and a single \pmcmc sampler with an equivalent memory and computational budget. 
An additional advantage of the \ipmcmc method is that it is suitable for distributed and multi-core architectures.  
\end{abstract}

\begin{keywords}
sequential Monte Carlo, Markov chain Monte Carlo, particle Markov chain Monte Carlo, parallelisation
\end{keywords}

\setcounter{page}{1}

\section{Introduction}
\label{sec:intro}

\mcmc methods are a fundamental tool for generating samples from a posterior density in Bayesian data analysis (see \eg, \citet{robert2013monte}). Particle Markov chain Monte Carlo (\pmcmc) methods, introduced by \citet{andrieuDH2010}, make use of sequential Monte Carlo (\smc) algorithms \citep{gordon1993novel,doucet2001sequential} to construct efficient proposals for the \mcmc sampler. 

One particularly widely used \pmcmc algorithm is particle Gibbs (\pg). The \pg algorithm modifies the \smc step in the \pmcmc algorithm to sample the latent variables conditioned on an existing particle trajectory, resulting in what is called a conditional sequential Monte Carlo (\csmc) step. The \pg method was first introduced as an efficient Gibbs sampler for latent variable models with static parameters \citep{andrieuDH2010}. Since then, the \pg algorithm and the extension by \citet{lindstenJS2014} have found numerous applications in \eg Bayesian non-parametrics \citep{ValeraFSPC2015,tripuraneni2015}, probabilistic programming \citep{wood2014new,vandemeent_aistats_2015} and graphical models \citep{everitt2012,naessethLS2014,naessethLS2015nested}.  

A drawback of PG is that it can be particularly adversely affected by \emph{path degeneracy} in the CSMC step.  Conditioning on an existing trajectory means that whenever resampling of the trajectories results in a common ancestor, this ancestor must correspond to this trajectory.  Consequently, the mixing of the Markov chain for the early steps in the state sequence can become very slow when the particle set typically coalesces to a single ancestor during the CSMC sweep.

In this paper we propose the interacting particle Markov chain Monte Carlo (\ipmcmc) sampler. In \ipmcmc we run a pool of \csmc and unconditional \smc algorithms as parallel processes that we refer to as nodes. After each run of this pool, we apply successive Gibbs updates to the indexes of the \csmc nodes, such that the indices of the \csmc nodes changes. Hence, the nodes from which retained particles are sampled can change from one MCMC iteration to the next. This lets us trade off exploration (\smc) and exploitation (\csmc) to achieve improved mixing of the Markov chains. Crucially, the pool provides numerous candidate indices at each Gibbs update, giving a significantly higher probability that an entirely new retained particle will be ``switched in'' than in non-interacting alternatives.

This interaction requires only minimal communication; each node must report an estimate of the marginal likelihood and receive a new role (\smc or \csmc) for the next sweep. This means that \ipmcmc is embarrassingly parallel and can be run in a distributed manner on multiple computers.


We prove that \ipmcmc is a partially collapsed Gibbs sampler on the extended space containing the particle sets for all nodes. In the special case where \ipmcmc uses only \emph{one} \csmc node, it can in fact be seen as a non-trivial and unstudied instance of the $\alpha$-\smc-based \citep{whiteley2016} \pmcmc method introduced by \citet{huggins2015}. However, with \ipmcmc we extend this further to allow for an arbitrary number of \csmc and standard \smc algorithms with interaction. Our experimental evaluation shows that \ipmcmc outperforms both independent \pg samplers as well as a single \pg sampler with the same number of particles run longer to give a matching computational budget.

An implementation of iPMCMC is provided in the probabilistic programming system \emph{Anglican}\footnote{\angurl} \citep{wood2014new}, whilst illustrative MATLAB code, similar to that used for the experiments, is also provided\footnote{\myurl}.




\section{Background}
\label{sec:background}
We start by briefly reviewing sequential Monte Carlo \citep{gordon1993novel,doucet2001sequential} and the particle Gibbs algorithm \citep{andrieuDH2010}. Let us consider a non-Markovian latent variable model of the following form
\begin{subequations}
	\label{eq:ssm}
	\begin{alignat}{2}
	x_t | x_{1:t-1} &\sim f_t(x_t | x_{1:t-1}), \\
	y_t | x_{1:t} &\sim g_t(y_t|x_{1:t}),
	\end{alignat}
\end{subequations}
where $x_t \in \setX$ is the latent variable and $y_t \in \setY$ the observation at time step $t$, respectively,
with transition densities $f_t$ and observation densities $g_t$; $x_1$ is drawn from some initial distribution $\mu(\cdot)$. The method we propose is not restricted to the above model, it can in fact be applied to an arbitrary sequence of targets.


We are interested in calculating expectations with respect to the posterior distribution $p(x_{1:T}|y_{1:T})$ on latent variables $x_{1:T} \eqdef (x_1,\ldots,x_T)$ conditioned on observations $y_{1:T} \eqdef (y_1,\ldots,y_T)$, which is proportional to the joint distribution $p(x_{1:T}, y_{1:T})$,
\begin{align}
\label{eq:jointdistribution}
p(x_{1:T} | y_{1:T}) \propto  \mu(x_1) \prod_{t=2}^T f_t(x_t | x_{1:t-1}) \prod_{t=1}^T g_t(y_t|x_{1:t}).\nonumber
\end{align}
In general, computing the posterior $p(x_{1:T}|y_{1:T})$ is intractable and we have to resort to approximations. We will in this paper focus on, and extend, the family of particle Markov chain Monte Carlo algorithms originally proposed by \citet{andrieuDH2010}. The key idea in \pmcmc is to use \smc to construct efficient proposals of the latent variables $x_{1:T}$ for an \mcmc sampler.

%
%

\begin{algorithm}[tb]
	\caption{Sequential Monte Carlo \hfill {\small (all for $i=1,\ldots,N$)}}
	\label{alg:smc}
	\begin{spacing}{1.2}
	\begin{algorithmic}[1]
		\STATE {\bfseries Input:} data $y_{1:T}$, number of particles $N$, proposals $q_t$
		\STATE $x_1^i \sim q_1(x_1)$
		\STATE $w_1^i = \frac{g_1(y_1|x_1^i) \mu(x_1^i)}{q_1(x_1^i)}$
		\FOR{$t = 2$ {\bfseries to} $T$}
		\STATE $a_{t-1}^i \sim \Discrete\left(\left\{\nw_{t-1}^{\ell}\right\}_{\ell=1}^N\right)$
		\STATE $x_t^i \sim q_t(x_t | x_{1:t-1}^{a_{t-1}^i})$ 
		\STATE Set $x_{1:t}^i = (x_{1:t-1}^{a_{t-1}^i},x_t^i)$
		\STATE $w_t^i = \frac{g_t(y_t|x_{1:t}^i) f_t(x_t^i | x_{1:t-1}^{a_{t-1}^i})}{q_t(x_t^i|x_{1:t-1}^{a_{t-1}^i})}$
		\ENDFOR
	\end{algorithmic}
\end{spacing}
\end{algorithm}

\subsection{Sequential Monte Carlo}
\label{sec:smc}
The \smc method is a widely used technique for approximating a sequence of target distributions: in our case $p(x_{1:t}|y_{1:t}) = p(y_{1:t})^{-1} p(x_{1:t},y_{1:t}), ~t=1,\ldots,T$. 
At each time step $t$ we 
generate a \emph{particle system}
$\{(x_{1:t}^i,w_{t}^i)\}_{i=1}^N$ which provides a weighted approximation  to $p(x_{1:t}|y_{1:t})$. Given such a weighted particle system at time $t-1$, this 
is propagated forward in time to $t$ by first drawing an ancestor variable $a_{t-1}^i$ for each particle from its corresponding distribution:
\begin{align}
\Prb(a_{t-1}^i = \ell) &= \nw_{t-1}^\ell.
&
\ell&=1,\ldots,N,
\end{align}
where $\nw_{t-1}^\ell = w_{t-1}^\ell / \sum_i w_{t-1}^i$. This is commonly known as the resampling step in the literature. We introduce the ancestor variables $\{a_{t-1}^i\}_{i=1}^N$ explicitly to simplify the exposition of the theoretical justification given in Section \ref{sec:theory}.

We continue by simulating from some given proposal density $x_t^i \sim q_t(x_t | x_{1:t-1}^{a_{t-1}^i})$ and re-weight the system of particles as follows:
\begin{align}
\label{eq:smcweights}
w_t^i = \frac{g_t(y_t|x_{1:t}^i) f_t(x_t^i | x_{1:t-1}^{a_{t-1}^i})}{q_t(x_t^i|x_{1:t-1}^{a_{t-1}^i})},
\end{align}
where $x_{1:t}^i = (x_{1:t-1}^{a_{t-1}^i},x_t^i)$. This results in a new particle system $\{(x_{1:t}^i,w_t^i)\}_{i=1}^N$ that approximates $p(x_{1:t}|y_{1:t})$. A summary is given in Algorithm~\ref{alg:smc}.


%
%
\subsection{Particle Gibbs}
\label{sec:pg}
The \pg algorithm \citep{andrieuDH2010} is a Gibbs sampler on the extended space composed of all random variables generated at one iteration, which still retains the original target distribution as a marginal. Though \pg allows for inference over both latent variables and static parameters, we will in this paper focus on sampling of the former.  The core idea of \pg is to iteratively run \emph{conditional} sequential Monte Carlo (\csmc) sweeps as shown in Algorithm~\ref{alg:csmc}, whereby each conditional trajectory is sampled from the surviving trajectories of the previous sweep.  This \emph{retained particle} index, $b$, is sampled with probability proportional to the final particle weights $\bar{w}^i_T$.

\begin{algorithm}[tb]
	\caption{Conditional sequential Monte Carlo}
	\label{alg:csmc}
	\begin{spacing}{1.2}
	\begin{algorithmic}[1]
		\STATE {\bfseries Input:} data $y_{1:T}$, number of particles $N$, proposals $q_t$, conditional trajectory $x_{1:T}'$
		\STATE $x_1^i \sim q_1(x_1), ~i=1,\ldots,N-1$ and set $x_1^N = x_1'$
		\STATE $w_1^i = \frac{g_1(y_1|x_1^i) \mu(x_1^i)}{q_1(x_1^i)}, ~i=1,\ldots,N$
		\FOR{$t = 2$ {\bfseries to} $T$}
		\STATE $a_{t-1}^i \sim \Discrete\left(\left\{\nw_{t-1}^\ell\right\}_{\ell=1}^N\right), ~i=1,\ldots,N-1$
		\STATE $x_t^i \sim q_t(x_t | x_{1:t-1}^{a_{t-1}^i}), ~i=1,\ldots,N-1$
		\STATE Set $a_{t-1}^N = N$ and $x_t^N = x_t'$
		\STATE Set $x_{1:t}^i = (x_{1:t-1}^{a_{t-1}^i},x_t^i), ~i=1,\ldots,N$
		\STATE $w_t^i = \frac{g_t(y_t|x_{1:t}^i) f_t(x_t^i | x_{1:t-1}^{a_{t-1}^i})}{q_t(x_t^i|x_{1:t-1}^{a_{t-1}^i})}, ~i=1,\ldots,N$
		\ENDFOR
	\end{algorithmic}
\end{spacing}
\end{algorithm}

\section{Interacting Particle Markov Chain Monte Carlo}
\label{sec:method}

The main goal of \ipmcmc is to increase the efficiency of \pmcmc, in particular particle Gibbs. The basic \pg algorithm is especially susceptible to the \emph{path degeneracy} effect of \smc samplers, \ie sample impoverishment due to frequent resampling.  Whenever the ancestral lineage collapses at the early stages of the state sequence, the common ancestor is, by construction, guaranteed to be equal to the retained particle.  This results in high correlation between the samples, and poor mixing of the Markov chain. 
To counteract this we might need a very high number of particles to get good mixing for all latent variables $x_{1:T}$, which can be infeasible due to e.g.~limited available memory. \ipmc can alleviate this issue by, from time to time, switching out a \csmc particle system with a completely independent \smc one, resulting in improved mixing.

\ipmcmc, summarized in Algorithm~\ref{alg:ipmc}, consists of $M$ interacting separate \csmc and \smc algorithms, exchanging only very limited information at each iteration to draw new \mcmc samples. We will refer to these internal \csmc and \smc algorithms as nodes, and assign an index $m=1,\ldots,M$. 
At every iteration, we have $P$ nodes running local \csmc algorithms, with the remaining
$M-P$ nodes running independent \smc.
The \csmc nodes are given an identifier $c_j \in \{1,\ldots,M\}, ~j=1,\ldots,P$ with $c_j \neq c_k,~k \neq j$ and we write $c_{1:P} = \{c_1,\ldots,c_P\}$. Let $\xb_m^i = x_{1:T,m}^i$ be the internal particle trajectories of node $m$.

\begin{algorithm}[tb]
	\caption{\ipmcmc sampler}
	\label{alg:ipmc}
	\begin{spacing}{1.2}
	\begin{algorithmic}[1]
		\STATE {\bfseries Input:} number of nodes $M$, conditional nodes $P$ and \mcmc steps $R$, initial $\xb_{1:P}'[0]$
		\FOR{$r = 1$ {\bfseries to} $R$}
		\STATE Workers $1:M \backslash c_{1:P}$ run Algorithm~\ref{alg:smc} (\smc)
		\STATE Workers $c_{1:P}$ run Algorithm~\ref{alg:csmc} (\csmc), conditional on $\xb_{1:P}'[r-1]$ respectively.
		\FOR{$j = 1$ {\bfseries to} $P$}
		\STATE Select a new conditional node by simulating $c_j$ according to \eqref{eq:simConditional}. 
		\STATE Set new \mcmc sample $\xb_j'[r] = \xb_{c_j}^{b_j}$ by simulating $b_j$ according to~\eqref{eq:simTrajectory}
		\ENDFOR
		\ENDFOR
	\end{algorithmic}
\end{spacing}
\end{algorithm}

Suppose we have access to $P$ trajectories ${\xb_{1:P}'[0]=(\xb_1'[0],\ldots,\xb_P'[0])}$ corresponding to the initial retained particles, where the index $[\cdot]$ denotes \mcmc iteration. At each iteration $r$, the nodes $c_{1:P}$ run \csmc (Algorithm~\ref{alg:csmc}) with the previous \mcmc sample $\xb_j'[r-1]$ as the retained particle. The remaining $M-P$ nodes run standard (unconditional) \smc, \ie Algorithm~\ref{alg:smc}.  Each node $m$ returns an estimate of the marginal likelihood for the internal particle system defined as
\begin{align}
\label{eq:ML}
\hat Z_{m} = \prod_{t=1}^T \frac{1}{N} \sum_{i=1}^N w_{t,m}^{i}.
\end{align}

The new conditional nodes are then set using a single loop $j=1:P$ of Gibbs updates, sampling new indices $c_j$ where
\begin{align}
\label{eq:simConditional}
&\Prb(c_j = m|c_{1:P\backslash j}) = \hat\nz_{m}^j \\
\label{eq:zeta_def}
\mathrm{and}  \quad &{\hat\nz_{m}^j = \frac{\hat Z_{m} \iden_{m \notin c_{1:P \backslash j}}}{ \sum_{n=1}^M \hat Z_{n} \iden_{n \notin c_{1:P \backslash j}}}},
\end{align}
defining ${c_{1:P\backslash j} = \{c_1,\ldots,c_{j-1},c_{j+1},\ldots,c_P\}}$.  We thus loop once through the conditional node indices and resample them from the union of the current node index and the unconditional node indices\footnote{Unconditional node indices here refers to all $m \notin c_{1:P}$ at that point in the loop. It may thus include nodes who just ran a CSMC sweep, but have been ``switched out'' earlier in the loop.}, in proportion to their marginal likelihood estimates.  This is the key step that lets us switch completely the nodes from which the retained particles are drawn.

One \mcmc iteration $r$ is concluded by setting the new samples $\xb_{1:P}'[r]$ by simulating from the corresponding conditional node's, $c_j$, internal particle system
\begin{align}
\label{eq:simTrajectory}
\Prb(b_j = i | c_j) &= \nw_{T,c_j}^i, \nonumber\\
\xb_j'[r] &= \xb_{c_j}^{b_j}.
\end{align}
The potential to pick from updated nodes $c_j$, having run independent \smc algorithms, decreases correlation and improves mixing of the  \mcmc sampler. Furthermore, as each Gibbs update corresponds to a one-to-many comparison for maintaining the same conditional index, the probability of switching is much higher than in an analogous non-interacting system.

The theoretical justification for iPMCMC is independent of how the initial trajectories $\xb_{1:P}'[0]$ are generated.  One simple and effective method (that we use in our experiments) is to run standard SMC sweeps for the ``conditional'' nodes at the first iteration.

The \ipmc samples $\xb_{1:P}'[r]$ can be used to estimate expectations for test functions $f: \setX^T \mapsto \reals$ in the standard Monte Carlo sense, with
\begin{align}
\label{eq:mcestimate}
\E[f(\xb)] \approx \frac{1}{RP}\sum_{r=1}^R\sum_{j=1}^P f(\xb_j'[r]).
\end{align}
However, we can improve upon this if we have access to all particles generated by the algorithm, see Section~\ref{sec:allparticles}.

We note that \ipmcmc is suited to distributed and multi-core architectures. In practise, the particle to be retained, should the node be a conditional node at the next iteration, can be sampled upfront and discarded if unused.  Therefore, at each iteration, only a single particle trajectory and normalisation constant estimate need be communicated between the nodes, whilst the time taken for calculation of the updates of $c_{1:P}$ is negligible.  Further, iPMCMC should be amenable to an asynchronous adaptation under the assumption of a random execution time, independent of $\xb_j'[r-1]$ in Algorithm~\ref{alg:ipmc}. We leave this asynchronous variant to future work.




\subsection{Theoretical Justification}
\label{sec:theory}
In this section we will give some crucial results to justify the proposed \ipmc sampler. This section is due to space constraints fairly brief and it is helpful to be familiar with the proof of \pg in \citet{andrieuDH2010}.
We start by defining some additional notation.
Let $\xib \eqdef \{x_{t}^i\}_{\substack{i=1:N\\t=1:T}} \bigcup \{a_{t}^i\}_{\substack{i=1:N\\t=1:T-1}}$
denote all generated particles and ancestor variables of a (C)\smc sampler.
We write $\xib_m$ when referring to the variables of the sampler local to node $m$.
%
Let the conditional particle trajectory and corresponding ancestor variables for node $c_j$ be denoted by $\{\xb_{c_j}^{b_j},\bb_{c_j}\}$, with $\bb_{c_j} = (\beta_{1,c_j},\ldots,\beta_{T,c_j})$,
$\beta_{T,c_j} = b_j$ and $\beta_{t,c_j} = a_{t,c_j}^{\beta_{t+1,c_j}}$. 
Let the posterior distribution of the latent variables be denoted by $\pi_T(\xb) \eqdef p(x_{1:T}|y_{1:T})$ with normalisation constant $Z \eqdef p(y_{1:T})$. 
Finally we 
note that the \smc and \csmc algorithms induce the respective distributions over the random variables
generated by the procedures:
\vspace{-2mm}
\begin{align*}
q_{\text{SMC}}(\xib) &= \prod_{i=1}^N q_1(x_{1}^i) \cdot \prod_{t=2}^T \prod_{i=1}^N \left[ 
\nw_{t-1}^{a_{t-1}^i}
q_t(x_{t}^i|x_{1:t-1}^{a_{t-1}^i}) \right], \\
q_{\text{CSMC}}\left(\xib \backslash \{\xb', \bb\} \mid \xb', \bb \right) &= \prod_{\substack{i=1\\i\neq b_{1}}}^N  q_1(x_{1}^i) \cdot \prod_{t=2}^T \prod_{\substack{i=1\\i\neq b_{t}}}^N \left[
\nw_{t-1}^{a_{t-1}^i}
q_t(x_{t}^i|x_{1:t-1}^{a_{t-1}^i})\right].
\end{align*}
Note that running Algorithm~\ref{alg:csmc} corresponds to simulating from $q_\text{CSMC}$ using a fixed
choice for the index variables $\bb = (N\,\ldots,N)$. While these indices are used to facilitate the
proof of validity of the proposed method, they have no practical relevance and can thus be set to arbitrary
values, as is done in Algorithm~\ref{alg:csmc}, in a practical implementation.

Now we are ready to state the main theoretical result.
\begin{theorem}
	\label{thm:one}
	The interacting particle Markov chain Monte Carlo sampler of Algorithm~\ref{alg:ipmc} is a partially collapsed Gibbs sampler \citep{van2008partially} for the target distribution
	\begin{align}
	\label{eq:targetdistribution}
	&\tilde \pi(\xib_{1:M}, c_{1:P}, b_{1:P}) =  \nonumber \\
	&\frac{1}{N^{PT} \binom{M}{P}} \prod_{\substack{m=1\\m\notin c_{1:P}}}^M q_{\text{SMC}}\left(\xib_m\right) \cdot \prod_{j = 1}^P \left[ \pi_T\left(\xb_{c_j}^{b_j}\right) \iden_{c_j \notin c_{1:j-1}} 
	q_{\text{CSMC}}\left(\xib_{c_j} \backslash \{\xb_{c_j}^{b_j}, \bb_{c_j}\} \mid \xb_{c_j}^{b_j}, \bb_{c_j}\right) \right].
	\end{align}
\end{theorem}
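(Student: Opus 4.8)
The plan is to follow the extended-target strategy of \citet{andrieuDH2010}: establish a single-node identity relating the \csmc and \smc proposal densities, use it to read off the full conditionals of $\tilde\pi$, match those conditionals to the three sampling operations of Algorithm~\ref{alg:ipmc}, and finally invoke the partially collapsed Gibbs framework of \citet{van2008partially} to conclude that one sweep leaves $\tilde\pi$ invariant.

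First I would derive the key per-node identity. Forming the ratio $q_{\text{SMC}}(\xib)/q_{\text{CSMC}}(\xib\backslash\{\xb^b,\bb\}\mid\xb^b,\bb)$ cancels every factor except those along the retained lineage $\bb$; multiplying by $\pi_T(\xb^b)$, substituting the weight definitions \eqref{eq:smcweights}, and regrouping the lineage weights via $w_t^{\beta_t}=\nw_t^{\beta_t}\sum_i w_t^i$ to assemble $\hat Z$ of \eqref{eq:ML} against the posterior factorisation yields
\begin{align}
\pi_T(\xb^b)\,q_{\text{CSMC}}\!\left(\xib\backslash\{\xb^b,\bb\}\mid\xb^b,\bb\right)=\frac{N^T\hat Z\,\nw_T^{b}}{Z}\,q_{\text{SMC}}(\xib).\nonumber
\end{align}
This identity is the crux: a node's \csmc factor in $\tilde\pi$ equals, up to the constant $N^T/Z$, its \smc density weighted by $\hat Z\,\nw_T^b$, so a particle system drawn from plain \smc with a lineage chosen proportionally to $\nw_T^b$ already carries the law demanded of a \csmc node. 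I would also record $\sum_i\nw_T^i=1$ and the unbiasedness $\E_{q_{\text{SMC}}}[\hat Z]=Z$; with the identity these verify that $\tilde\pi$ is a proper density (each conditional node contributes $N^T$ after integrating $\xib_{c_j}$ and summing $b_j$, the \smc nodes integrate to one, and the distinctness indicators together with the $N^{PT}$ and $\binom{M}{P}$ prefactors account for the remaining combinatorial mass).

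Next I would identify the three conditionals and match them line by line. Conditioning on $c_{1:P}$, $b_{1:P}$ and the retained trajectories, every $\pi_T$ factor is constant, so the law over the remaining $\xib_{1:M}$ is a product of $q_{\text{SMC}}$ over the non-conditional nodes and $q_{\text{CSMC}}$ over the conditional ones --- exactly the sweeps of lines~3--4. Summing the identity over $b_j$ and using $\sum_i\nw_T^i=1$ collapses node $c_j=m$'s contribution to $\tfrac{N^T}{Z}\hat Z_m\,q_{\text{SMC}}(\xib_m)$, which reunites with the other \smc factors into a term independent of $m$; since $\prod_j\iden_{c_j\notin c_{1:j-1}}$ merely enforces distinctness, $\tilde\pi(c_j=m\mid\xib_{1:M},c_{1:P\backslash j})\propto\hat Z_m\,\iden_{m\notin c_{1:P\backslash j}}$, i.e.\ precisely $\hat\nz_m^j$ of \eqref{eq:zeta_def} (line~6); note that $b_j$ has been marginalised, which is what makes the step \emph{collapsed}. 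Finally $\tilde\pi(b_j=i\mid\xib_{1:M},c_{1:P})\propto\nw_{T,c_j}^i$ reproduces \eqref{eq:simTrajectory} (line~7).

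The last and most delicate step is arguing that executing these updates in the order of Algorithm~\ref{alg:ipmc} --- regenerate all $\xib_{1:M}$, then loop $j=1,\dots,P$ sampling $c_j$ with $b_j$ marginalised and immediately re-drawing $b_j$ --- defines a kernel with $\tilde\pi$ invariant. The clean way to handle the apparent collapse is to observe that, within each $j$, sampling $c_j$ from its $b_j$-marginal conditional and then $b_j\mid c_j$ is just the chain-rule factorisation of a single \emph{blocked} Gibbs update of the pair $(c_j,b_j)$, which unambiguously preserves $\tilde\pi$; composing this with the blocked regeneration of the particle variables and sequencing over $j$ (each step conditioning on the current values of the other coordinates) gives a composition of $\tilde\pi$-invariant kernels, and \citet{van2008partially} supply the formal guarantee that this marginalise--sample--impute schedule is a valid partially collapsed Gibbs sampler. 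I expect this ordering/collapse bookkeeping --- together with the relabelling argument (valid by exchangeability of the particle indices) that lets the same $\xib_m$ be read either as an \smc node or as a \csmc node with retained index $b_j$ --- to be the main obstacle, whereas the identity and the conditional matching are the \citet{andrieuDH2010} calculation carried out node-wise. As a corollary the marginal of $\tilde\pi$ in each $\xb_{c_j}^{b_j}$ is $\pi_T$, which validates the estimator \eqref{eq:mcestimate}.
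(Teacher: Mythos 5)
Your proposal is correct and takes essentially the same route as the paper's proof: your per-node identity $\pi_T(\xb^b)\,q_{\text{CSMC}} = \tfrac{N^T \hat Z \nw_T^b}{Z}\, q_{\text{SMC}}$ is exactly the reformulation \eqref{eq:reformtargetdist} used in the appendix, and you match the same three conditionals (particle regeneration, the collapsed $c_j$ update giving $\hat\nz_m^j$, and the $b_j$ update giving $\nw_{T,c_j}^{b_j}$) to lines 3--4, 6 and 7 of Algorithm~\ref{alg:ipmc}, concluding via \citet{van2008partially}. The only cosmetic difference is that you package each $(c_j,b_j)$ as a single blocked marginal-then-conditional update, whereas the paper sequences all $c_j$ updates before all $b_j$ updates; since the $b_j$ conditional depends only on $\xib_{1:M}$ and $c_j$, the two kernels coincide.
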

\begin{proof}
	See Appendix~\ref{sec:proof} at the end of the paper. 
\end{proof}
\newtheorem{rem}{Remark}
\begin{rem}
	The marginal distribution of $(\xb_{c_{1:P}}^{b_{1:P}},c_{1:P},b_{1:P})$, with $\xb_{c_{1:P}}^{b_{1:P}} = (\xb_{c_1}^{b_1},\ldots,\xb_{c_P}^{b_P})$, under \eqref{eq:targetdistribution} is given by
	\begin{align}
	\label{eq:marginaldistribution}
	&\tilde \pi\left(\xb_{c_{1:P}}^{b_{1:P}},c_{1:P},b_{1:P}\right) = \frac{\prod_{j = 1}^P \pi_T\left(\xb_{c_j}^{b_j}\right) \iden_{c_j \notin c_{1:j-1}}}{N^{PT} \binom{M}{P}} .
	\end{align}
	This means that each trajectory $\xb_{c_j}^{b_j}$ is marginally distributed according to the
	posterior distribution of interest, $\pi_T$. Indeed, the $P$ retained trajectories of \ipmcmc
	will in the limit $R \rightarrow \infty$ 
	be independent draws from $\pi_T$.
\end{rem}
Note that adding a backward or ancestor simulation step can drastically increase mixing when sampling the conditional trajectories $\xb_j'[r]$ \citep{lindsten2013backward}. In the \ipmcmc sampler we can replace simulating from the final weights on line~7 by a backward simulation step. Another option for the \csmc nodes is to replace this step by internal ancestor sampling \citep{lindstenJS2014} steps and simulate from the final weights as normal.


\subsection{Using All Particles}
\label{sec:allparticles}
At each \mcmc iteration $r$, we generate $MN$ full particle trajectories. Using only $P$ of these as in \eqref{eq:mcestimate} might seem a bit wasteful. We can however make use of all particles to estimate expectations of interest by, for each Gibbs update $j$, averaging over the possible new values for the conditional node index $c_j$ and corresponding particle index $b_j$. We can do this by replacing $f(\xb_j'[r])$ in \eqref{eq:mcestimate} by
\begin{align*}
\E_{c_j|c_{1:P\backslash j}}\left[\E_{b_j | c_j}\left[f(\xb_j'[r])\right]\right] 
= \sum_{m=1}^M 
\hat\nz_{m}^j
\sum_{i=1}^N
\nw_{T,m}^i f(\xb_{m}^i).
\end{align*}
This procedure is referred to as a Rao-Blackwellization of a statistical estimator and is (in terms of variance) never worse than the original one.  We highlight that each $\hat\nz_{m}^j$, as defined in~\eqref{eq:zeta_def}, depends on which indices are sampled earlier in the index reassignment loop.  Further details, along with a derivation, are provided in Appendix~\ref{sec:suppUseAll}.



\subsection{Choosing P}
\label{sec:choosingP}
Before jumping into the full details of our experimentation, we quickly consider the choice of $P$. Intuitively we can think of the independent \smc's as particularly useful if they are selected as the next conditional node. The probability of the event 
that at least one conditional node switches with an unconditional, is given by
\begin{align}
\label{eq:switchingprob}
\Prb(\{\text{switch}\}) 
= 1 - \E\Big[\prod_{j=1}^P \frac{\hat Z_{c_j}}{\hat Z_{c_j} +\sum_{m \notin c_{1:P}}^M \hat Z_{m}}\Big].
\end{align}
There exist theoretical and experimental results \citep{pitt2012some,berard2014lognormal,doucet2015efficient} that show that the distributions of the normalisation constants are well-approximated by their log-Normal limiting distributions. Now, with $\sigma^2$ ($\propto \frac{1}{N}$) being the variance of the (C)\smc estimate, it means we have $\log \left(Z^{-1} \hat Z_{c_j} \right) \sim \N(\frac{\sigma^2}{2},\sigma^2)$ and $\log \left(Z^{-1} \hat Z_{m} \right) \sim \N(-\frac{\sigma^2}{2},\sigma^2)$, $m\notin c_{1:P}$ at stationarity, where $Z$ is the true normalization constant. Under this assumption, we can accurately estimate the probability \eqref{eq:switchingprob} for different choices of $P$ an example of which is shown in Figure~\ref{fig:theSwitchingProb} along with additional analysis in Appendix~\ref{sec:supp-choosep}. These provide strong empirical evidence that the switching probability is maximised for $P=M/2$.

\begin{figure}[t]
	\centering
	\begin{subfigure}[t]{0.4\textwidth}
		\includegraphics[height=0.75\textwidth]{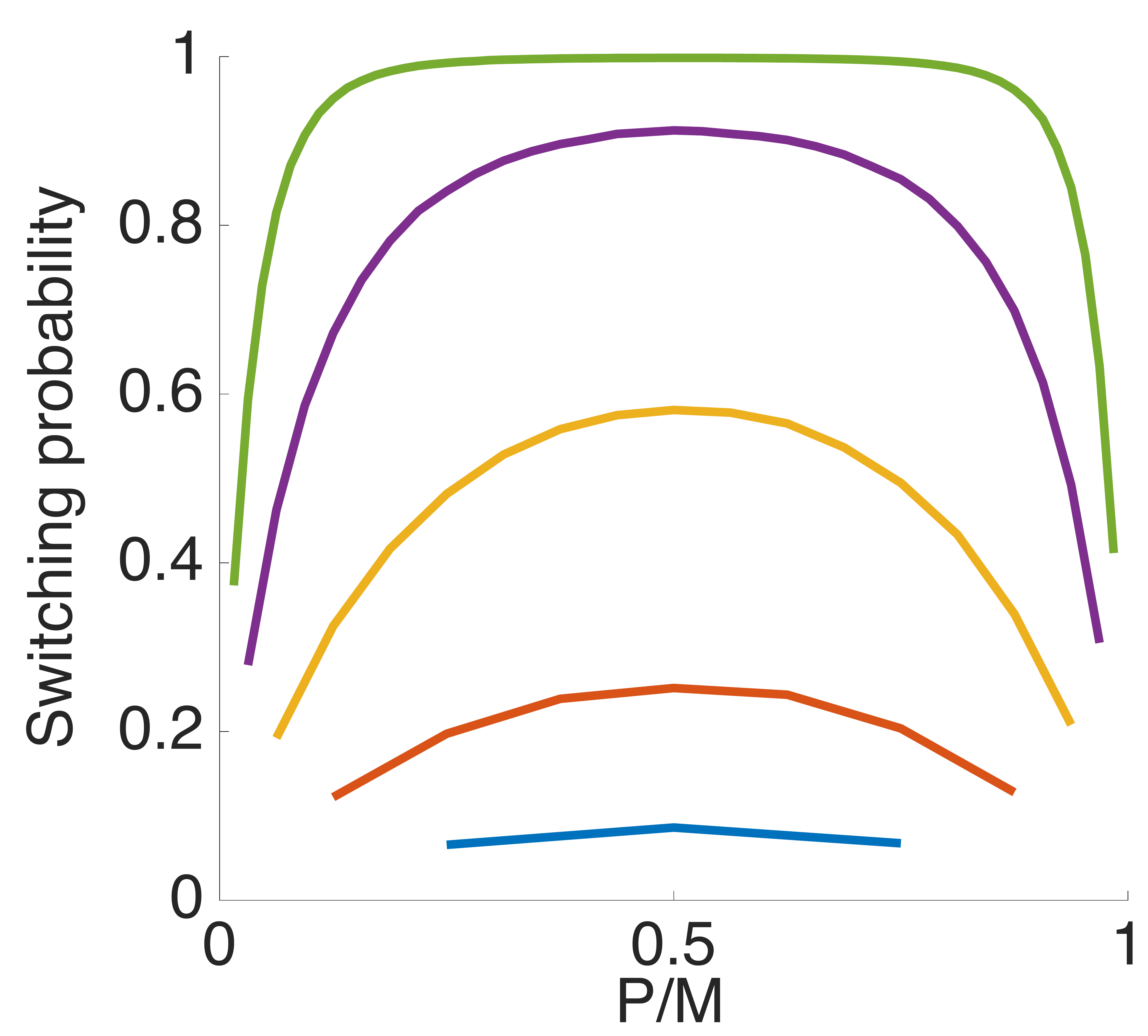}
		\caption{Limiting log-Normal\label{fig:theSwitchingProb}}
	\end{subfigure}
	~~~~~~~~~~~
	\begin{subfigure}[t]{0.4\textwidth}
		\includegraphics[height=0.75\textwidth]{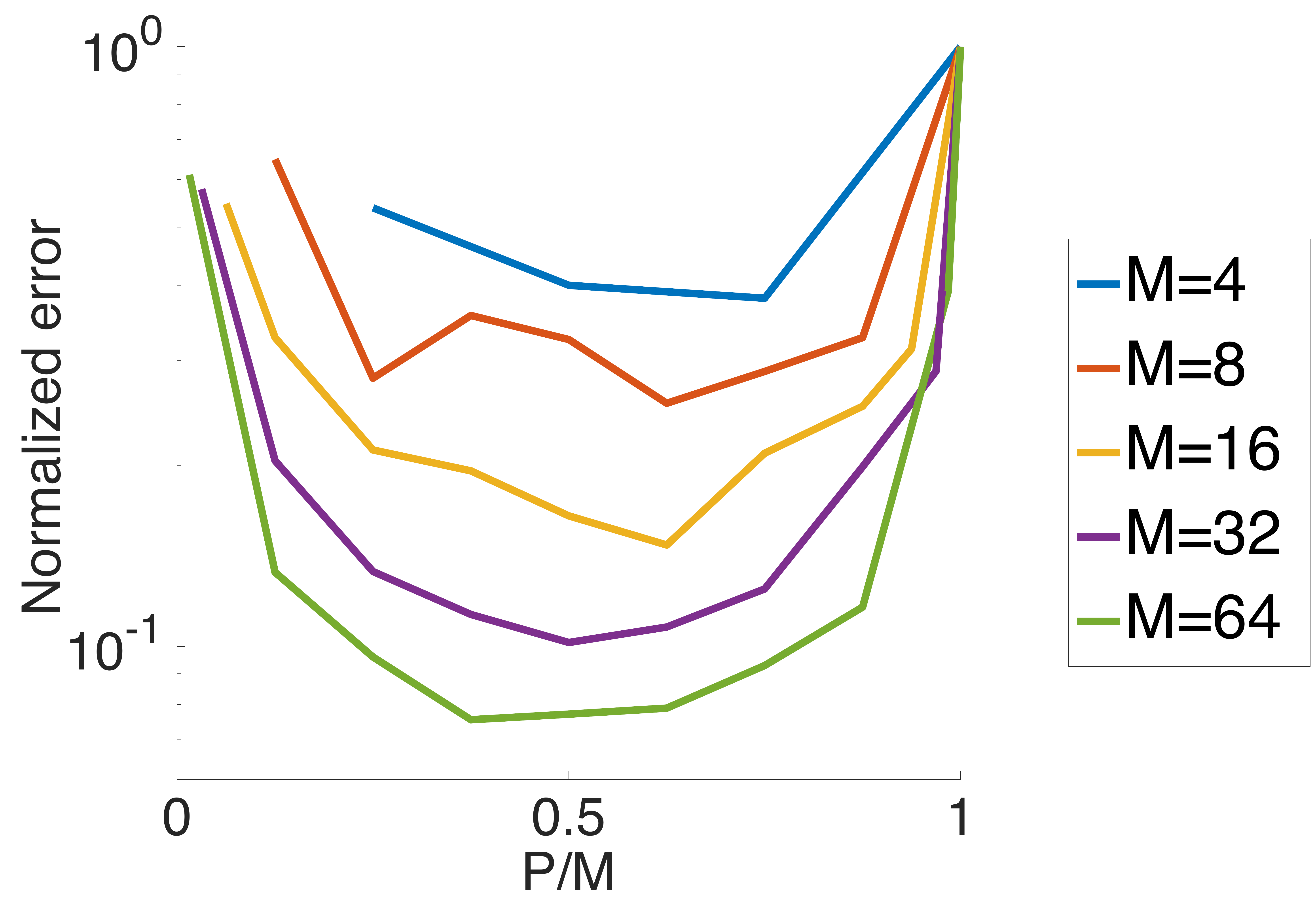}
		\caption{Gaussian state space model\label{fig:Psweep}}
	\end{subfigure}
	\caption{a) Estimation of switching probability for different choices of P and M assuming the log-Normal limiting distribution for $\hat{Z}_m$ with $\sigma=3$. b) Median error in mean estimate for different choices of P and M over 10 different synthetic datasets of the linear Gaussian state space model given in~\eqref{eq:LGSS} after 1000 MCMC iterations. Here errors are normalized by the error of a multi-start PG sampler which is a special case of iPMCMC for which $P=M$ (see Section \ref{sec:experiments}).
	}
\end{figure}

%
%

In practice we also see that best results are achieved when $P$ makes up roughly half of the nodes, see Figure~\ref{fig:Psweep} for performance on the state space model introduced in~\eqref{eq:LGSS}. Note also that the accuracy seems to be fairly robust with respect to the choice of $P$. 
Based on these results, we set the value of $P=\frac{M}{2}$ for the rest of our experiments.

\section{Experiments}
\label{sec:experiments}

To demonstrate the 
empirical performance of iPMCMC we report experiments on two state space models.  
Although both the models considered are Markovian, we emphasise that iPMCMC goes far beyond this and can be applied to arbitrary graphical models. 
We will focus our comparison on the trivially distributed alternatives, whereby $M$ independent PMCMC samplers are run in parallel--these are PG, particle independent Metropolis-Hastings (PIMH) \cite{andrieuDH2010} and the alternate move PG sampler (APG) \cite{holenstein2009particle}. Comparisons to other alternatives, including independent SMC, serialized implementations of PG and PIMH, and running a mixture of independent PG and PIMH samplers, are provided in Appendix \ref{sec:supp-additionalFigures}.  None outperformed the methods considered here, with the exception of running a serialized PG implementation with an increased number of particles, requiring significant additional memory ($O(MN)$ as opposed to $O(M+N)$).

In PIMH a new particle set is proposed at each \mcmc step using an independent \smc sweep, which is then either accepted or rejected using the standard Metropolis-Hastings acceptance ratio. 
APG interleaves PG steps with PIMH steps
in an attempt to overcome the issues caused by path degeneracy in PG.  We refer to the trivially distributed versions of these algorithms as multi-start PG, PIMH and APG respectively (mPG, mPIMH and mAPG). 
We use Rao-Blackwellization, as described in \ref{sec:allparticles}, to average over all the generated particles for all methods, weighting the independent Markov chains equally for mPG, mPIMH and mAPG. We note that mPG is a special case of iPMCMC for which $P=M$.  For simplicity, multinomial resampling was used in the experiments, with the prior transition distribution of the latent variables taken for the proposal.  $M=32$ nodes and $N=100$ particles were used unless otherwise stated.  Initialization of the retained particles for iPMCMC and mPG was done by using standard SMC sweeps.

\subsection{Linear Gaussian State Space Model}
\label{sec:LGSS}
We first consider a linear Gaussian state space model (LGSSM) with 3 dimensional latent states $x_{1:T}$, 20 dimensional observations $y_{1:T}$ and dynamics given by 
\begin{subequations}
	\label{eq:LGSS}
	\begin{align}
	x_1 & \sim \mathcal{N} \left(\mu, V\right) \label{eq:LGSSa}\\
	x_t & = \alpha x_{t-1} + \delta_{t-1} \quad & \delta_{t-1} \sim \mathcal{N} \left(0, \Omega\right) \label{eq:LGSSb}\\
	y_t & = \beta x_{t} + \varepsilon_{t} \quad & \varepsilon_{t} \sim \mathcal{N} \left(0, \Sigma\right).
	\label{eq:LGSSc}
	\end{align}
\end{subequations}
We set $\mu = [0, 1, 1]^T$, $V = 0.1 \; \mathbf{I}$, $\Omega = \mathbf{I}$ and $\Sigma = 0.1 \; \mathbf{I}$ where $\mathbf{I}$ represents the identity matrix.  The constant transition matrix, $\alpha$, corresponds to successively applying rotations of $\frac{7\pi}{10}$, $\frac{3\pi}{10}$ and $\frac{\pi}{20}$ about the first, second and third dimensions of $x_{t-1}$ respectively followed by a scaling of $0.99$ to ensure that the dynamics remain stable.  A total of 10 different synthetic datasets of length $T=50$ were generated by simulating from~\eqref{eq:LGSSa}--\eqref{eq:LGSSc}, each with a different emission matrix $\beta$ generated by sampling each column independently from a symmetric Dirichlet distribution with concentration parameter 0.2.

Figure \ref{fig:meanConv} shows convergence in the estimate of the latent variable means to the ground-truth solution for iPMCMC and the benchmark algorithms as a function of MCMC iterations.  It shows that iPMCMC comfortably outperforms the alternatives from around 200 iterations onwards, with only iPMCMC and mAPG demonstrating behaviour consistent with the Monte Carlo convergence rate, suggesting that mPG and mPIMH are still far from the ergodic regime.  Figure \ref{fig:meanPos} shows the same errors after $10^4$ MCMC iterations as a function of position in state sequence.  This demonstrates that iPMCMC outperformed all the other algorithms for the early stages of the state sequence, for which mPG performed particularly poorly. Toward the end of state sequence, iPMCMC, mPG and mAPG all gave similar performance, whilst that of mPIMH was significantly worse.

\begin{figure*}[t]
	\centering
	\begin{subfigure}[t]{0.49\textwidth}
		\includegraphics[width=\textwidth]{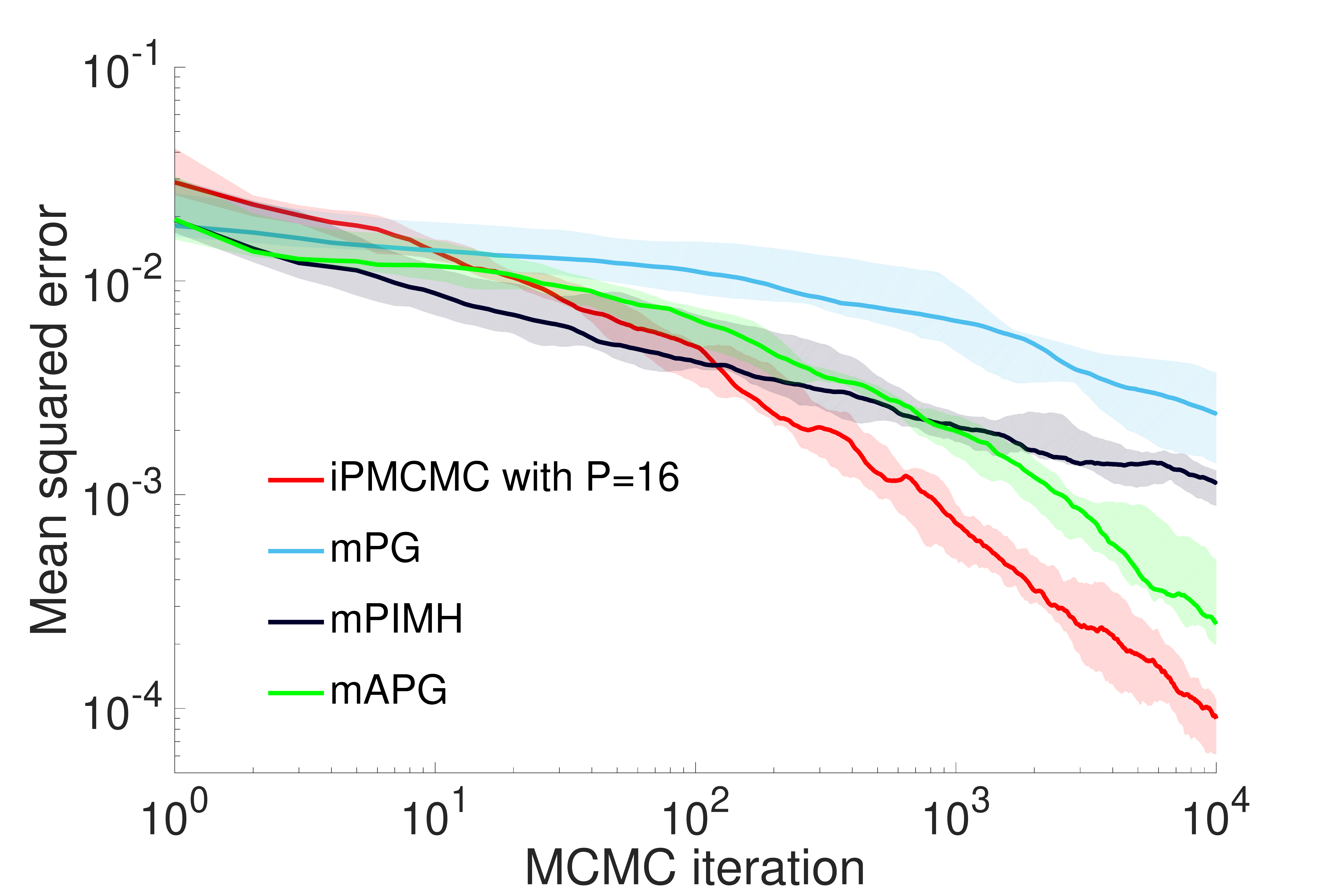}
		\caption{Convergence in mean for full sequence}
		\label{fig:meanConv}
	\end{subfigure}
	~  
	\begin{subfigure}[t]{0.49\textwidth}
		\includegraphics[width=\textwidth]{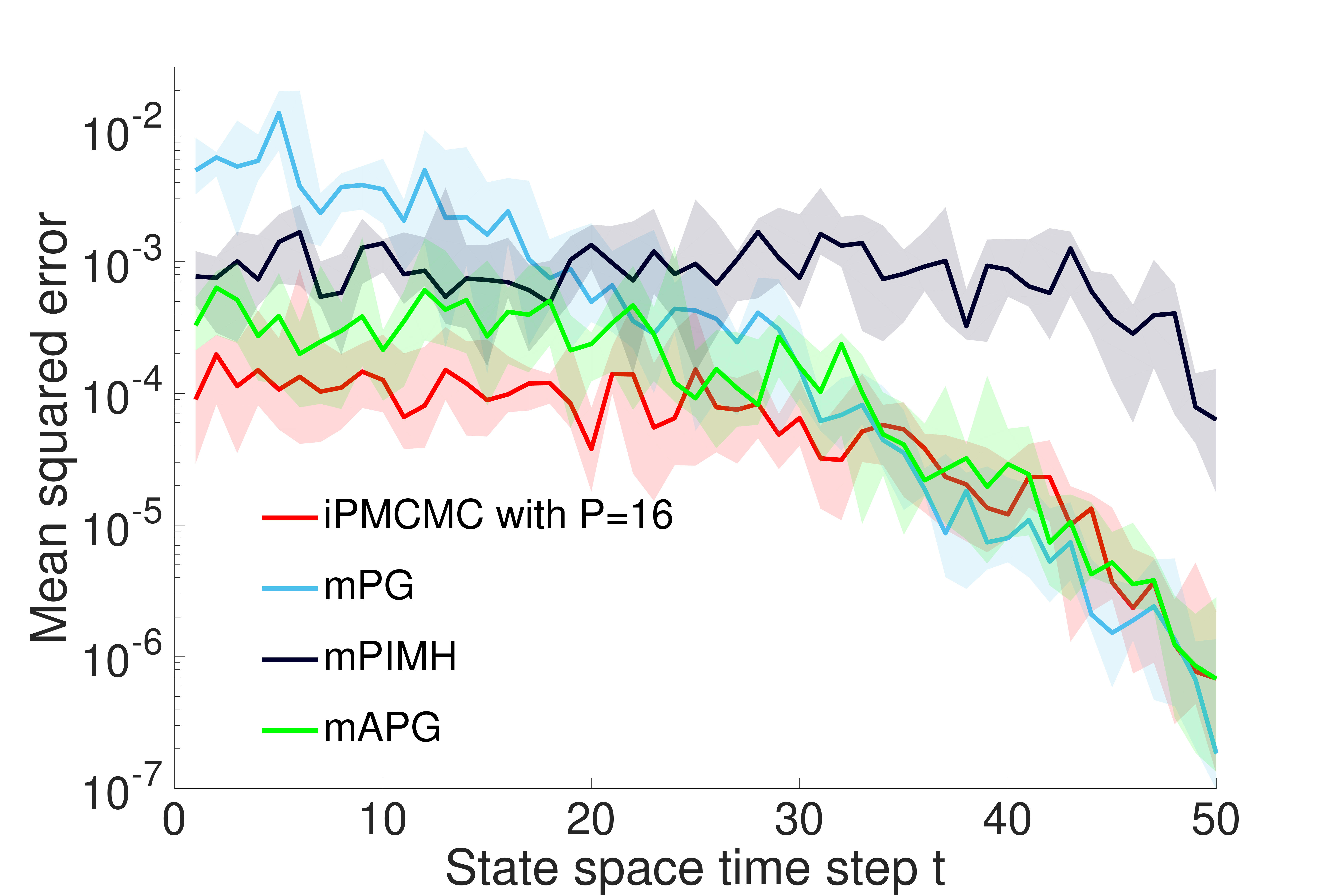}
		\caption{Final error in mean for latent marginals}
		\label{fig:meanPos}
	\end{subfigure}
	
	\caption{Mean squared error averaged over all dimensions and steps in the state sequence as a function of MCMC iterations (left) and mean squared error after $10^4$ iterations averaged over dimensions as function of position in the state sequence (right) for \eqref{eq:LGSS} with 50 time sequences.  The solid line shows the median error across the 10 tested synthetic datasets, while the shading shows the upper and lower quartiles.  Ground truth was calculated using the Rauch--Tung--Striebel smoother algorithm \cite{rauch1965maximum}. 
		\label{fig:groundTruth}}
\end{figure*}

\begin{figure*}[t]
	\centering
	\begin{subfigure}[t]{0.49\textwidth}
		\includegraphics[width=\textwidth]{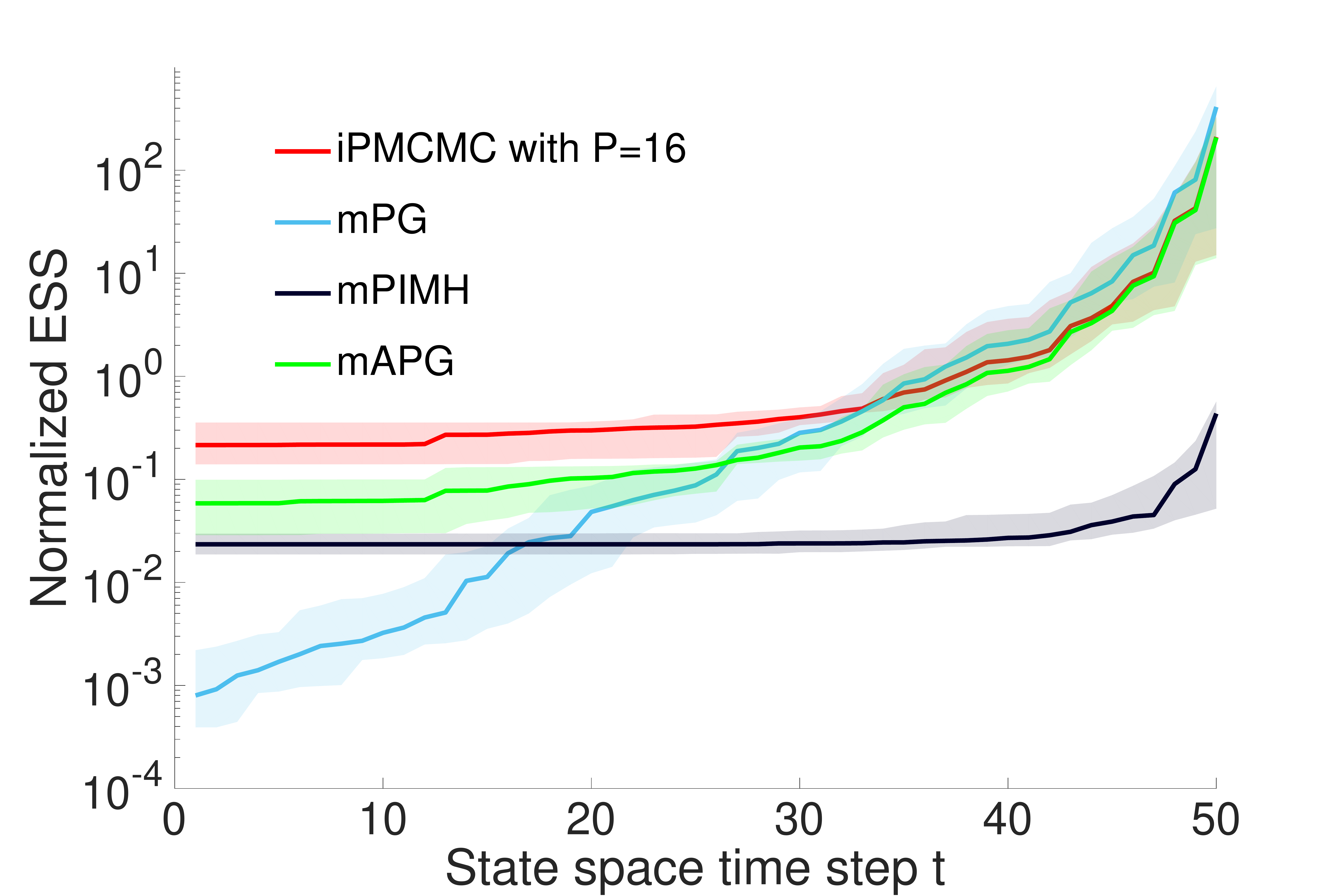}
		\caption{LGSSM}
	\end{subfigure}
	~ 
	\begin{subfigure}[t]{0.49\textwidth}
		\includegraphics[width=\textwidth]{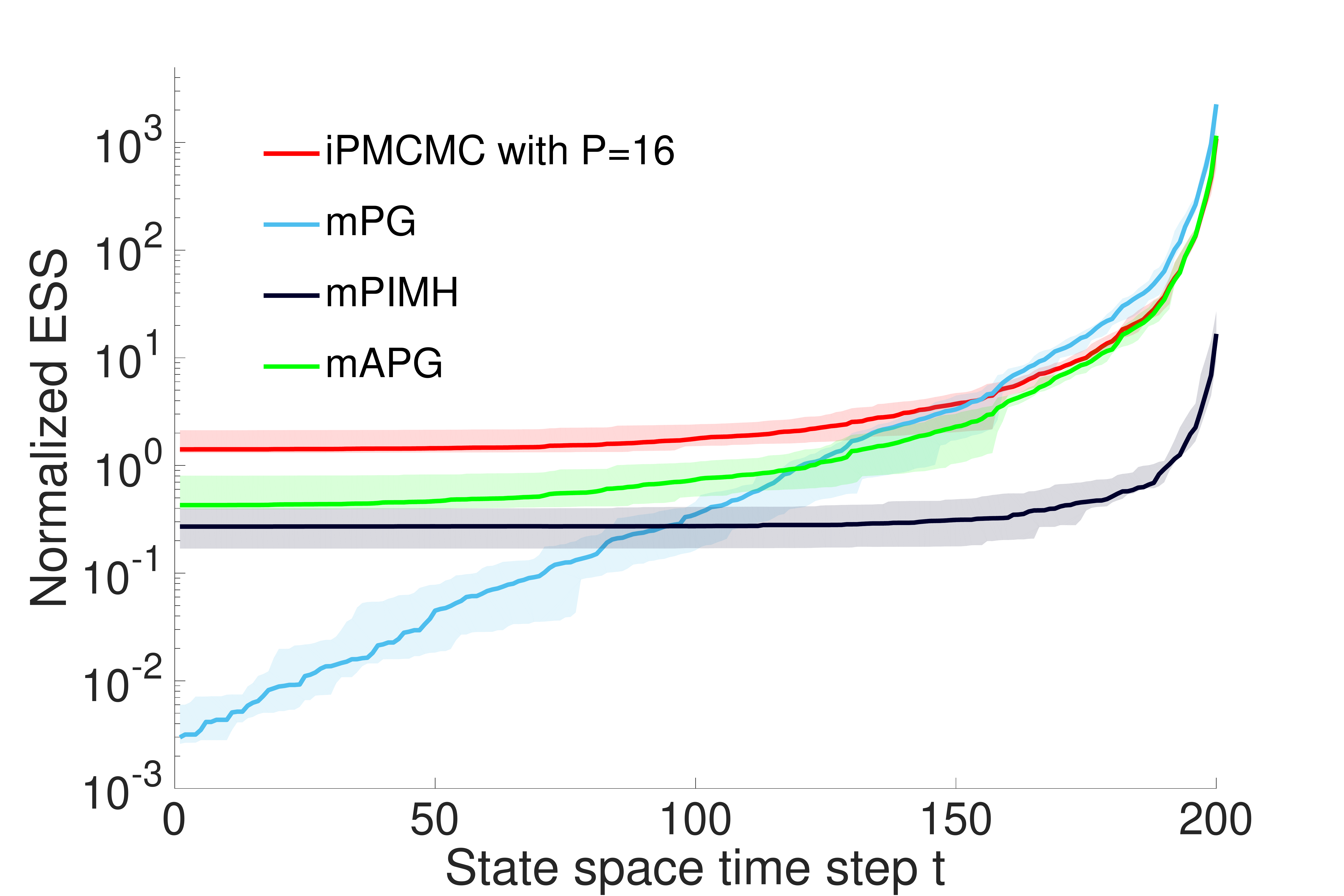}
		\caption{NLSSM}
	\end{subfigure}
	
	\caption{Normalized effective sample size  (NESS) for LGSSM (left) and NLSSM (right).
		\label{fig:ESS}}
\end{figure*}

\subsection{Nonlinear State Space Model}
\label{sec:nlss}

We next consider the one dimensional nonlinear state space model (NLSSM) considered by, among others, \citet{gordon1993novel,andrieuDH2010}
\begin{subequations}
	\label{eq:NLSS}
	\begin{align}
	x_1 & \sim \mathcal{N} \left(\mu, v^2\right) \label{eq:NLSSa}\\
	x_t & = \frac{x_{t-1}}{2} + 25 \frac{x_{t-1}}{1+x_{t-1}^2} + 8 \cos \left(1.2t\right) + \delta_{t-1} \label{eq:NLSSb} \\
	y_t & = \frac{{x_{t}}^2}{20} + \varepsilon_{t} \label{eq:NLSSc}
	\end{align}
\end{subequations}
where $\delta_{t-1} \sim \mathcal{N} \left(0, \omega^2\right)$ and $\varepsilon_{t} \sim \mathcal{N} \left(0, \sigma^2\right)$.  We set the parameters as $\mu = 0$, $v=\sqrt{5}$, $\omega = \sqrt{10}$ and $\sigma = \sqrt{10}$.  Unlike the LGSSM, this model does not have an analytic solution and therefore one must resort to approximate inference methods. 
Further, the multi-modal nature of the latent space makes full posterior inference over $x_{1:T}$ challenging for long state sequences. 

\begin{figure*}[t]
	\centering
	\includegraphics[width=1\textwidth]{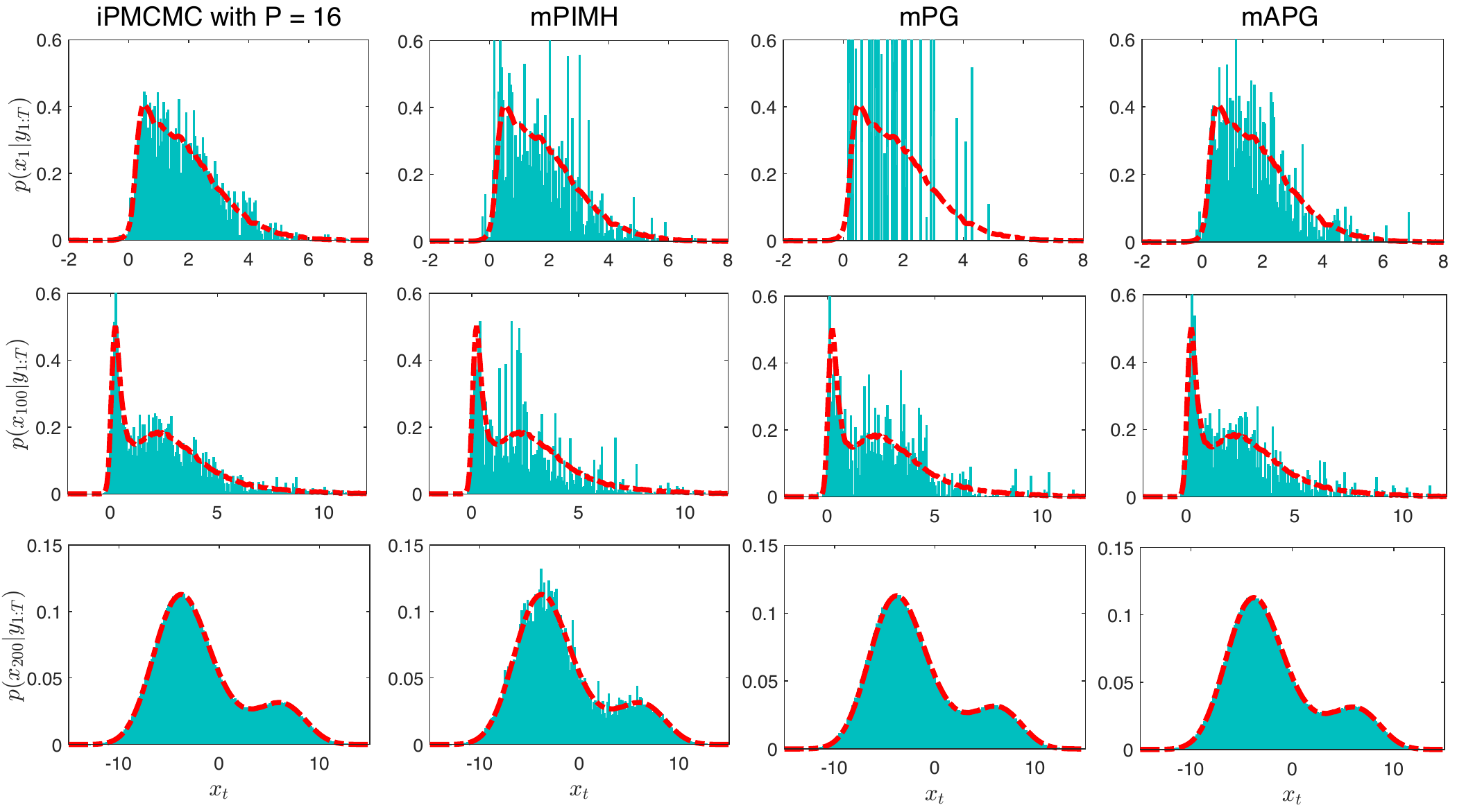}
	\caption{Histograms of generated samples at $t=1, 100, \text{ and } 200$ for a single dataset generated from \eqref{eq:NLSS} with $T=200$.  Dashed red line shows an approximate estimate of the ground truth, found by running a kernel density estimator on the combined samples from a small number of independent SMC sweeps, each with $10^7$ particles. \label{fig:nlssHists}}
\end{figure*}

To examine the relative mixing of iPMCMC we calculate an effective sample size (ESS) for different steps in the state sequence.  In order to calculate the ESS, we condensed identical samples as done in for example \cite{vandemeent_aistats_2015}.  Let 
\begin{align*}
u_{t}^k \in \{x_{t,m}^{i}[r]\}^{i=1:N,r=1:R}_{m=1:M}, \quad \forall k \in 1 \dots K, \; t \in 1 \dots T
\end{align*} 
denote the unique samples of $x_t$ generated by all the nodes and sweeps of particular algorithm after $R$ iterations, where $K$ is the total number of unique samples generated.  The weight assigned to these unique samples, $v_t^{k}$, is given by the combined weights of all particles for which $x_t$ takes the value $u_{t}^k$:
\begin{align}
v_t^{k} = \sum_{r=1}^{R} \sum_{m=1}^{M} \sum_{i=1}^{N} \bar{w}_{t,m}^{i,r} \eta_{m}^{r} \delta_{x_{t,m}^{i}[r]}(u_{t}^{k})
\end{align}
where $\delta_{x_{t,m}^{i}[r]}(u_{t}^{k})$ is the Kronecker delta function and $\eta_{m}^{r}$ is a node weight.  For iPMCMC the node weight is given by as per the Rao-Blackwellized estimator described in Section~\ref{sec:allparticles}. For mPG and mPIMH, $\eta_{m}^{r}$ is simply $\frac{1}{RM}$,
as samples from the different nodes are weighted equally in the absence of interaction. 
Finally we define the effective sample size as $\text{ESS}_t = \left(\textstyle\sum_{k=1}^K \left(v_t^{k}\right)^2\right)^{-1}$.

Figure \ref{fig:ESS} shows the ESS for the LGSSM and NLSSM as a function of position in the state sequence.  For this, we omit the samples generated by the initialization step as this SMC sweep is common to all the tested algorithms.  We further normalize by the number of MCMC iterations so as to give an idea of the rate at which unique samples are generated.  These show that for both models the ESS of iPMCMC, mPG and mAPG is similar towards the end of the space sequence, but that iPMCMC outperforms all the other methods at the early stages. The ESS of mPG was particularly poor at early iterations.  PIMH performed poorly throughout, reflecting the very low observed acceptance ratio of around $7.3\%$ on average. 

It should be noted that the ESS is not a direct measure of performance for these models.  For example, the equal weighting of nodes is likely to make the ESS artificially high for mPG, mPIMH and mAPG, when compared with methods such as iPMCMC that assign a weighting to the nodes at each iteration.  To acknowledge this, we also plot histograms for the marginal distributions of a number of different position in the state sequence as shown in Figure \ref{fig:nlssHists}.  These confirm that iPMCMC and mPG have similar performance at the latter state sequence steps, whilst iPMCMC is superior at the earlier stages, with mPG producing almost no more new samples than those from the initialization sweep due to the degeneracy.  The performance of PIMH was consistently worse than iPMCMC throughout the state sequence, with even the final step exhibiting noticeable noise.



\section{Discussion and Future Work}
\label{sec:discussion}

The \ipmcmc sampler overcomes degeneracy issues in \pg by allowing the newly sampled particles from \smc nodes to replace the retained particles in \csmc nodes. Our experimental results demonstrate that, for the models considered, this switching in rate is far higher than the rate at which PG generates fully independent samples. Moreover, the results in Figure~\ref{fig:Psweep} suggest that the degree of improvement over an m\pg sampler with the same total number of nodes increases with the total number of nodes in the pool. 

The mAPG sampler performs an accept reject step that compares the marginal likelihood estimate of a single \csmc sweep to that of a single \smc sweep. In the \ipmcmc sampler the \csmc estimate of the marginal likelihood is compared to a population sample of \smc estimates, resulting in a higher probability that at least one of the \smc nodes will become a \csmc node.

Since the original \pmcmc paper in 2010 there have been several papers studying \citep{chopin2015,lindsten2015uniform} 
and improving upon the basic \pg algorithm. Key contributions to combat the path degeneracy effect are backward simulation \citep{whiteley2010efficient,lindsten2013backward}
and ancestor sampling \citep{lindstenJS2014}. These can also be used to improve the \ipmcmc method ever further.  




\section*{Acknowledgments}
Tom Rainforth is supported by a BP industrial grant. Christian A. Naesseth is supported by CADICS, a Linnaeus Center, funded by the Swedish Research Council (VR). Fredrik Lindsten is supported by the project \emph{Learning of complex dynamical systems} (Contract number: 637-2014-466) also funded by the Swedish Research Council. Frank Wood is supported under DARPA PPAML through the U.S. AFRL under Cooperative Agreement number FA8750-14-2-0006, Sub Award number 61160290-111668.

\appendix


%
%

\section{Proof of Theorem~\ref{thm:one}}
\label{sec:proof}
The proof follows similar ideas as \citet{andrieuDH2010}. We prove that the interacting particle Markov chain Monte Carlo sampler is in fact a standard partially collapsed Gibbs sampler \citep{van2008partially} on an extended space ${\Upsilon \eqdef \setX^{\otimes MTN} \times [N]^{\otimes M(T-1)N} \times [M]^{\otimes P} \times [N]^{\otimes P}}$.
\vspace{10pt}

\begin{proof}
	Assume the setup of Section~\ref{sec:method}. 
	With $\tilde\pi(\cdot)$ with as per \eqref{eq:targetdistribution}, we will show that the Gibbs sampler on the extended space, $\Upsilon$, defined as follows	
	\begin{subequations}
		\label{eq:gibbs}
		\begin{align}
		\xib_{1:M} \backslash\{\xb_{c_{1:P}}^{b_{1:P}}, \bb_{c_{1:P}} \} ~&\sim \tilde \pi(~\cdot~|\xb_{c_{1:P}}^{b_{1:P}}, \bb_{c_{1:P}},c_{1:P}, b_{1:P})\label{eq:particles},\\
		c_j &\sim \tilde \pi(~\cdot~|\xib_{1:M}, c_{1:P\backslash j}), ~~j=1,\ldots,P,\label{eq:worker}\\
		b_j &\sim \tilde \pi(~\cdot~|\xib_{1:M}, c_{1:P}), ~~j=1,\ldots,P,\label{eq:index}
		\end{align}
	\end{subequations}
	is equivalent to the \ipmcmc method in Algorithm~\ref{alg:ipmc}.
	
	First, the initial step \eqref{eq:particles} corresponds to sampling from
	\begin{align*}
	\tilde\pi(\xib_{1:M} &\backslash\{\xb_{c_{1:P}}^{b_{1:P}}, \bb_{c_{1:P}} \} | \xb_{c_{1:P}}^{b_{1:P}}, \bb_{c_{1:P}},c_{1:P}, b_{1:P}) = \\ 
	& \prod_{\substack{m=1\\m\notin c_{1:P}}}^M q_{\text{SMC}}\left(\xib_m\right) \prod_{j = 1}^P q_{\text{CSMC}}\left(\xib_{c_j} \backslash \{\xb_{c_j}^{b_j}, \bb_{c_j}\} \mid \xb_{c_j}^{b_j}, \bb_{c_j}, c_j, b_j \right).
	\end{align*}
	This, excluding the conditional trajectories, just corresponds to steps 3--4 in Algorithm~\ref{alg:ipmc}, \ie running $P$ \csmc and $M-P$ \smc algorithms independently.
	
	We continue with a reformulation of \eqref{eq:targetdistribution} which will be useful to prove correctness for the other two steps
	\begin{align}
	\label{eq:reformtargetdist}
	\tilde \pi & (\xib_{1:M},  c_{1:P}, b_{1:P}) \nonumber\\ &=\frac{1}{\binom{M}{P}} \prod_{m=1}^M q_{\text{SMC}}\left(\xib_m\right) \cdot
	\prod_{j = 1}^P \left[
	\iden_{c_j \notin c_{1:j-1}} \nw_{T,c_j}^{b_j}
	\pi_T\left(\xb_{c_j}^{b_j}\right) \frac{q_{\text{CSMC}}\left(\xib_{c_j} \backslash \{\xb_{c_j}^{b_j}, \bb_{c_j}\} \mid \xb_{c_j}^{b_j}, \bb_{c_j}, c_j, b_j \right)}{N^{T} 
		\nw_{T,c_j}^{b_j}
		q_{\text{SMC}}\left(\xib_{c_j}\right)}\right] \nonumber\\
	&= \frac{1}{\binom{M}{P}} \prod_{m=1}^M q_{\text{SMC}}\left(\xib_m\right) \cdot \prod_{j = 1}^P \frac{\hat Z_{c_j}}{Z}\iden_{c_j \notin c_{1:j-1}} 
	\nw_{T,c_j}^{b_j}
	.
	\end{align}
	
	Furthermore, we note that by marginalising (collapsing) the above reformulation, \ie \eqref{eq:reformtargetdist}, over $b_{1:P}$ we get
	\begin{align*}
	\tilde\pi(\xib_{1:M}, c_{1:P}) 
	&= \frac{1}{\binom{M}{P}} \prod_{m=1}^M q_{\text{SMC}}\left(\xib_m\right) \prod_{j = 1}^P \frac{\hat Z_{c_j}}{Z}\iden_{c_j \notin c_{1:j-1}} .\nonumber
	\end{align*}
	From this it is easy to see that $\tilde\pi(c_j | \xib_{1:M}, c_{1:P\backslash j}) = \hat\nz_{c_j}^j$, which 
	corresponds to sampling the conditional node indices, \ie step 6 in Algorithm~\ref{alg:ipmc}. Finally, from \eqref{eq:reformtargetdist} we can see that simulating $b_{1:P}$ can be done independently as follows
	\begin{align*}
	&\tilde\pi(b_{1:P} | \xib_{1:M}, c_{1:P}) = \frac{\tilde\pi(b_{1:P} ,\xib_{1:M}, c_{1:P})}{\tilde\pi(\xib_{1:M}, c_{1:P})} =  \prod_{j = 1}^P 
	\nw_{T,c_j}^{b_j}
	.
	\end{align*}
	This corresponds to step 7 in the \ipmcmc sampler, Algorithm~\ref{alg:ipmc}. So the procedure defined by \eqref{eq:gibbs} is a partially collapsed Gibbs sampler, derived from \eqref{eq:targetdistribution}, and we have shown that it is exactly equal to the \ipmcmc sampler described in Algorithm~\ref{alg:ipmc}.
\end{proof}

\section{Using All Particles}
\label{sec:suppUseAll}
The Monte Carlo estimator is given by
\begin{align}
\label{eq:mcestimate-supp}
\E[f(\xb)] & \approx \frac{1}{RP}\sum_{r=1}^R\sum_{j=1}^P f(\xb_j'[r]) \nonumber \\ &= \frac{1}{R}\sum_{r=1}^R \frac{1}{P}\sum_{j=1}^P f(\xb_j'[r]),
\end{align}
where we can note that $\xb_j'[r] = \xb_{c_j}^{b_j}$ from the internal particle system at iteration $r$. We can however make use of all particles to estimate expectations of interest by, for each \mcmc iteration $r$, averaging over the sampled conditional node indices $c_{1:P}$ and corresponding particle indices $b_{1:P}$. This procedure is referred to as a Rao-Blackwellization of a statistical estimator and is (in terms of variance) never worse than the original one, and often much better. For iteration $r$ we need to calculate the following
\begin{align*}
\frac{1}{P}\sum_{j=1}^P f(\xb_j'[r]) = \frac{1}{P}\sum_{j=1}^P f(\xb_{c_j}^{b_j}),
\end{align*}
where we can Rao-Blackwellize the selection of the retained particle along with each individual Gibbs update as following
\begin{align*}
\frac{1}{P}\sum_{j=1}^P \E_{c_j, b_j | \xi_{1:M}, c_{1:P \backslash j}}\left[f(\xb_{c_j}^{b_j}) \right] =& \frac{1}{P}\sum_{j=1}^P \E_{c_j | \xi_{1:M}, c_{1:P \backslash j}}\left[\sum_{i=1}^N \bar w_{T,c_j}^i  f(\xb_{c_j}^{i}) \right] \\ =& \frac{1}{P}\sum_{j=1}^P \sum_{i=1}^N \E_{c_j | \xi_{1:M}, c_{1:P \backslash j}}\left[\bar w_{T,c_j}^i  f(\xb_{c_j}^{i}) \right] \\
=& \frac{1}{P}\sum_{j=1}^P \sum_{i=1}^N \sum_{m=1}^M \hat \zeta_m^j \bar w_{T,m}^i  f(\xb_{m}^{i}) \\=& \frac{1}{P}\sum_{j=1}^P \sum_{m=1}^M \hat \zeta_m^j \sum_{i=1}^N \bar w_{T,m}^i  f(\xb_{m}^{i})
\\=& \frac{1}{P} \sum_{m=1}^M  \left[\left(\sum_{j=1}^P \hat \zeta_m^j \right) \cdot \left(\sum_{i=1}^N \bar w_{T,m}^i  f(\xb_{m}^{i}) \right)\right]
\end{align*}
where we have made use of the knowledge that the internal particle system $\{(\xb_{m}^{i},\bar w_{T,m}^i)\}$ does not change between Gibbs updates of the $c_j$'s, whereas the $\hat \zeta_m^j$ do.  We emphasise that this is a separate Rao-Blackwellization of each Gibbs update of the conditional node indices, such that each is conditioned upon the actual update made at $j-1$, rather than a simultaneous Rao-Blackwellization of the full batch of $P$ updates.  Though the latter also has analytic form and should theoretically be lower variance, it suffers from inherent numerical instability and so is difficult to calculate in practise.  We found that empirically there was not a noticeable difference between the performance of the two procedures.  Furthermore, one can always run additional Gibbs updates on the $c_j$'s and obtain an improve estimate on the relative sample weightings if desired.

\section{Choosing $P$}
\label{sec:supp-choosep}
For the purposes of this study we assume, without loss of generality, that the indices for the conditional nodes are always $c_{1:P} = \{1,\ldots,P\}$. Then we can show that the probability of the event that at least one conditional nodes switches with an unconditional is given by
\begin{align}
\label{eq:switchingprob-supp}
\Prb(\{\text{switch}\}) = 1 - \E\left[\prod_{j=1}^P \frac{\hat Z_{j}}{\hat Z_{j} +\sum_{m=P+1}^M \hat Z_{m}}\right].
\end{align}

Now, there are some asymptotic (and experimental) results \citep{pitt2012some,berard2014lognormal,doucet2015efficient} that indicate that a decent approximation for the distribution of the log of the normalisation constant estimates is Gaussian. This would mean the distributions of the conditional and unconditional normalisation constant estimates with variance $\sigma^2$ can be well-approximated as follows
\begin{align}
\log \left(\frac{\hat Z_{j}}{Z}\right) &\sim \N(\frac{\sigma^2}{2},\sigma^2),\quad j = 1,\ldots,P,\\
\log \left(\frac{\hat Z_{m}}{Z}\right) &\sim \N(-\frac{\sigma^2}{2},\sigma^2),\quad m = P+1,\ldots,M.
\end{align}
A straight-forward Monte Carlo estimation of the switching probability, \ie $\Prb(\{\text{switch}\})$, can be seen in Figure~\ref{fig:supp-expectedswitch} for various settings of $\sigma$ and $M$. These results seem to indicate that letting $P \approx M/2$ maximises the probability of switching.

\begin{figure}
	\centering
	\begin{subfigure}[b]{0.48\textwidth}
		\includegraphics[width=\textwidth,trim={0.3cm 0 0.9cm 0},clip]{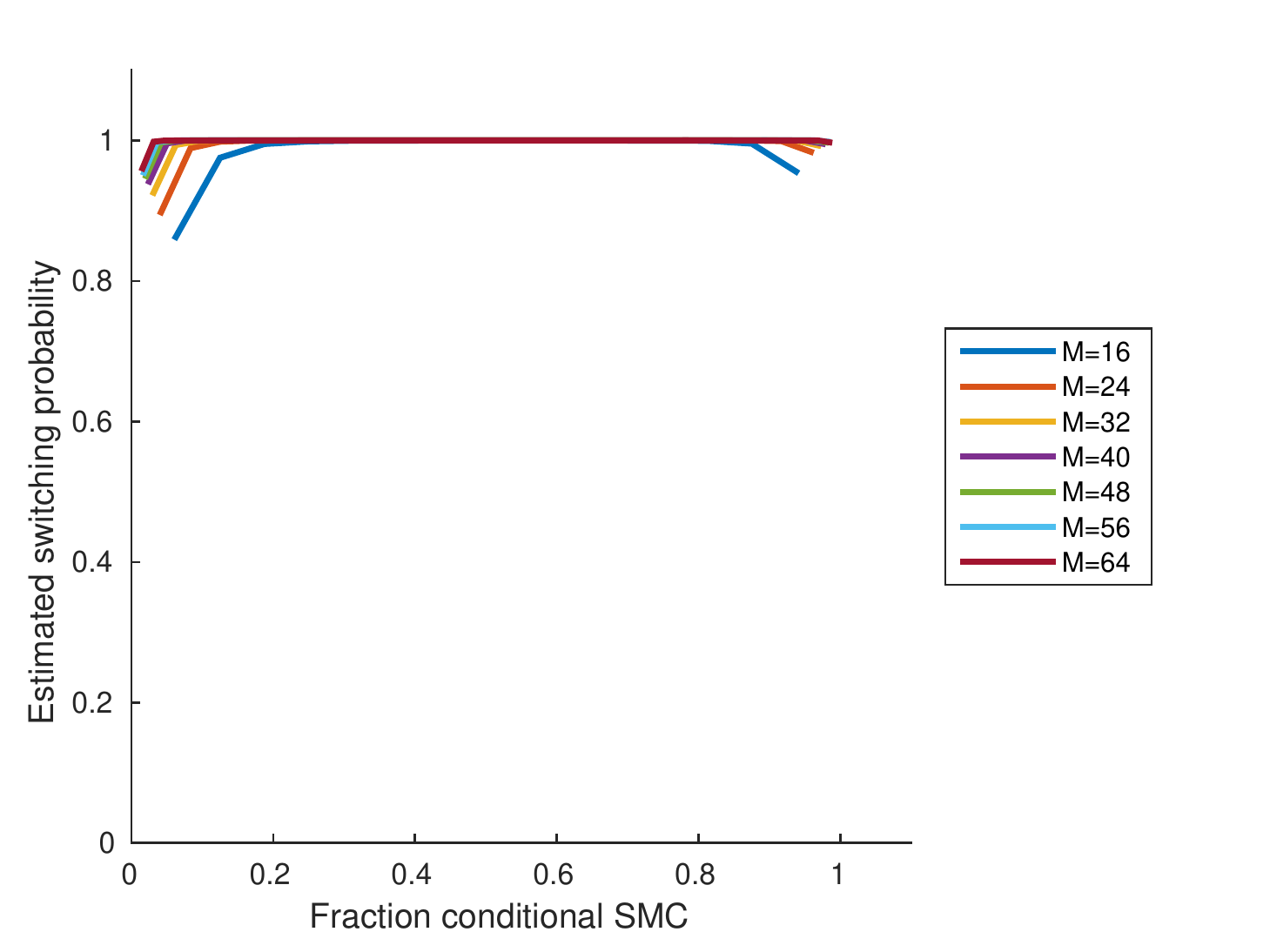}
		\caption{$\sigma=1$}
	\end{subfigure}
	~~ 
	\begin{subfigure}[b]{0.48\textwidth}
		\includegraphics[width=\textwidth,trim={0.3cm 0 0.9cm 0},clip]{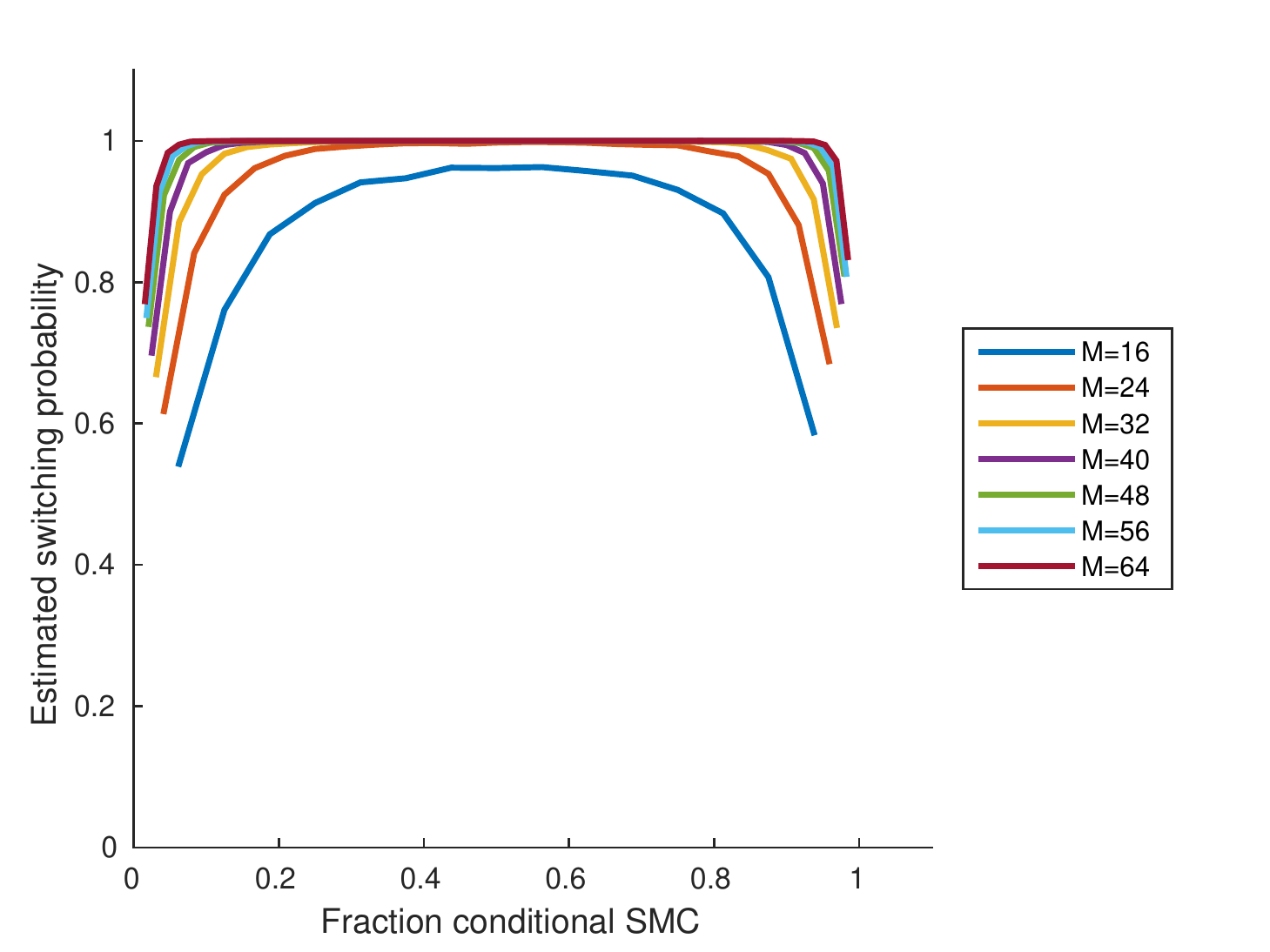}
		\caption{$\sigma=2$}
	\end{subfigure}
	
	\begin{subfigure}[b]{0.48\textwidth}
		\includegraphics[width=\textwidth,trim={0.3cm 0 0.9cm 0},clip]{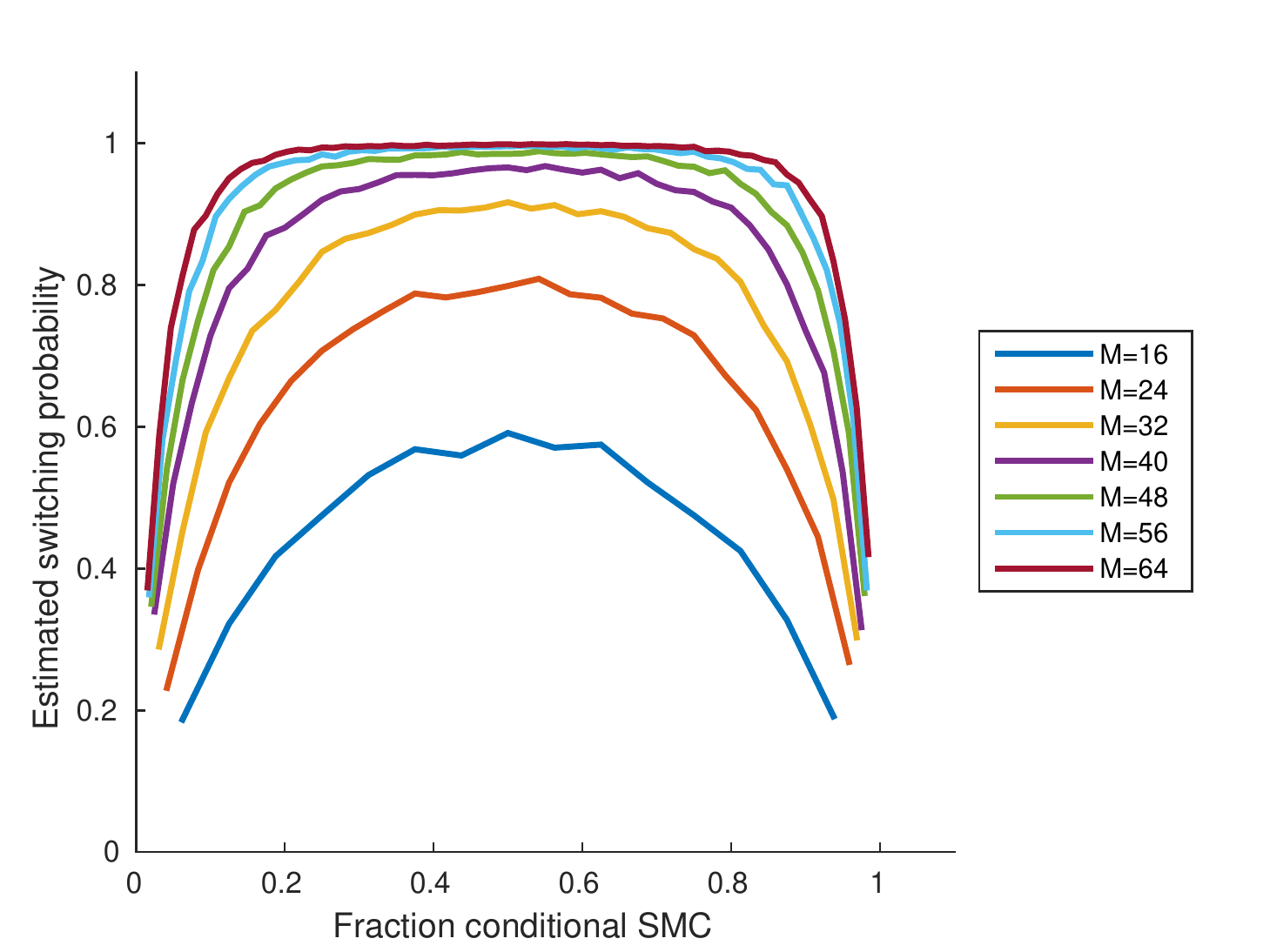}
		\caption{$\sigma=3$}
	\end{subfigure}
	~~ 
	\begin{subfigure}[b]{0.48\textwidth}
		\includegraphics[width=\textwidth,trim={0.3cm 0 0.9cm 0},clip]{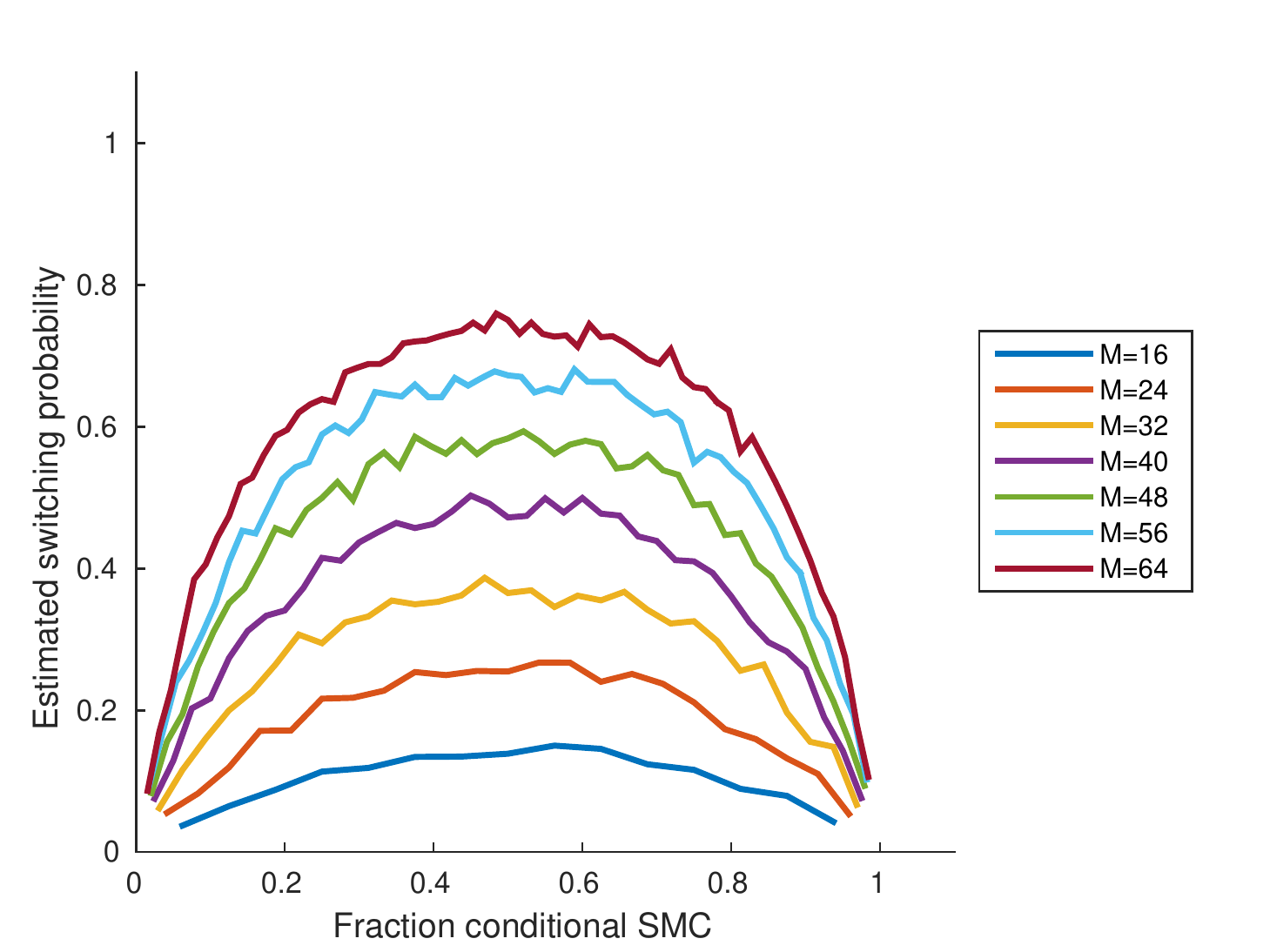}
		\caption{$\sigma=4$}
	\end{subfigure}
	
	\begin{subfigure}[b]{0.48\textwidth}
		\includegraphics[width=\textwidth,trim={0.3cm 0 0.9cm 0},clip]{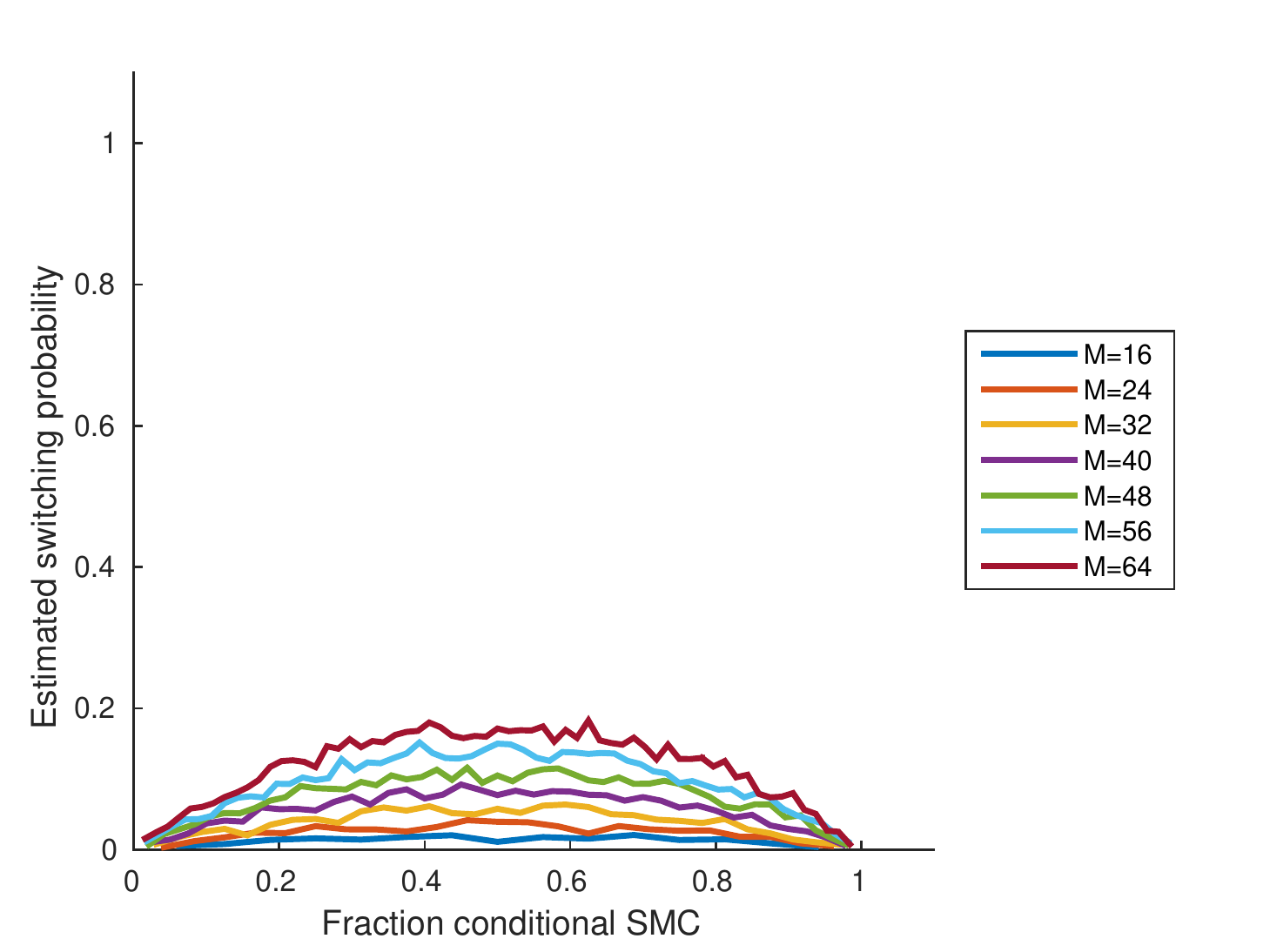}
		\caption{$\sigma=5$}
	\end{subfigure}
	~~ 
	\begin{subfigure}[b]{0.48\textwidth}
		\includegraphics[width=\textwidth,trim={0.3cm 0 0.9cm 0},clip]{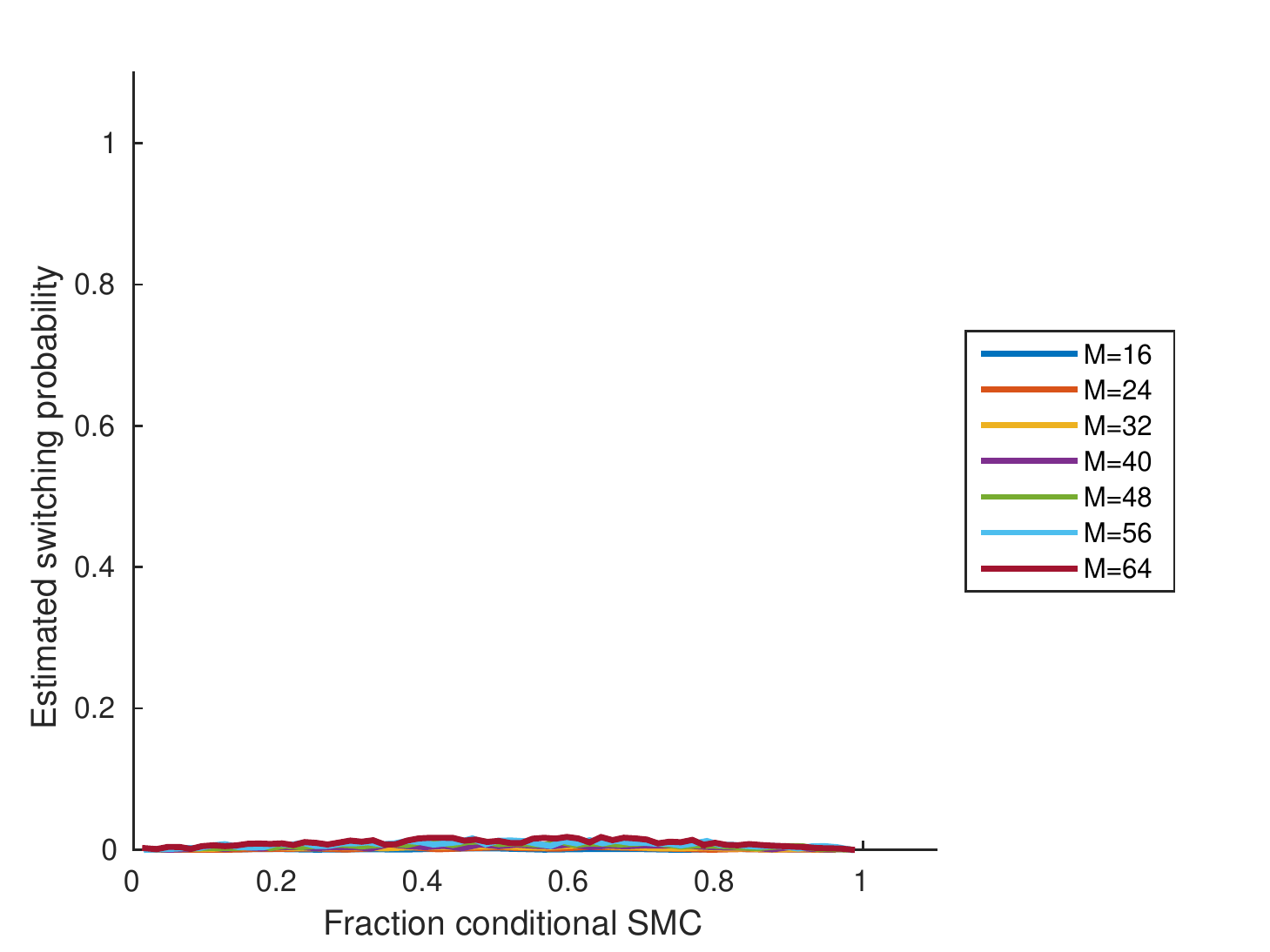}
		\caption{$\sigma=6$}
	\end{subfigure}
	\caption{Estimation of switching probability for various settings of $\sigma$ and $M$.}\label{fig:supp-expectedswitch}
\end{figure}

\newpage
\section{Additional Results Figures}
\label{sec:supp-additionalFigures}

\vspace{-10pt}
\begin{figure*}[h!]
	\centering
	\begin{subfigure}[t]{0.48\textwidth}
		\includegraphics[width=\textwidth,trim={4cm 0 6cm 0},clip]{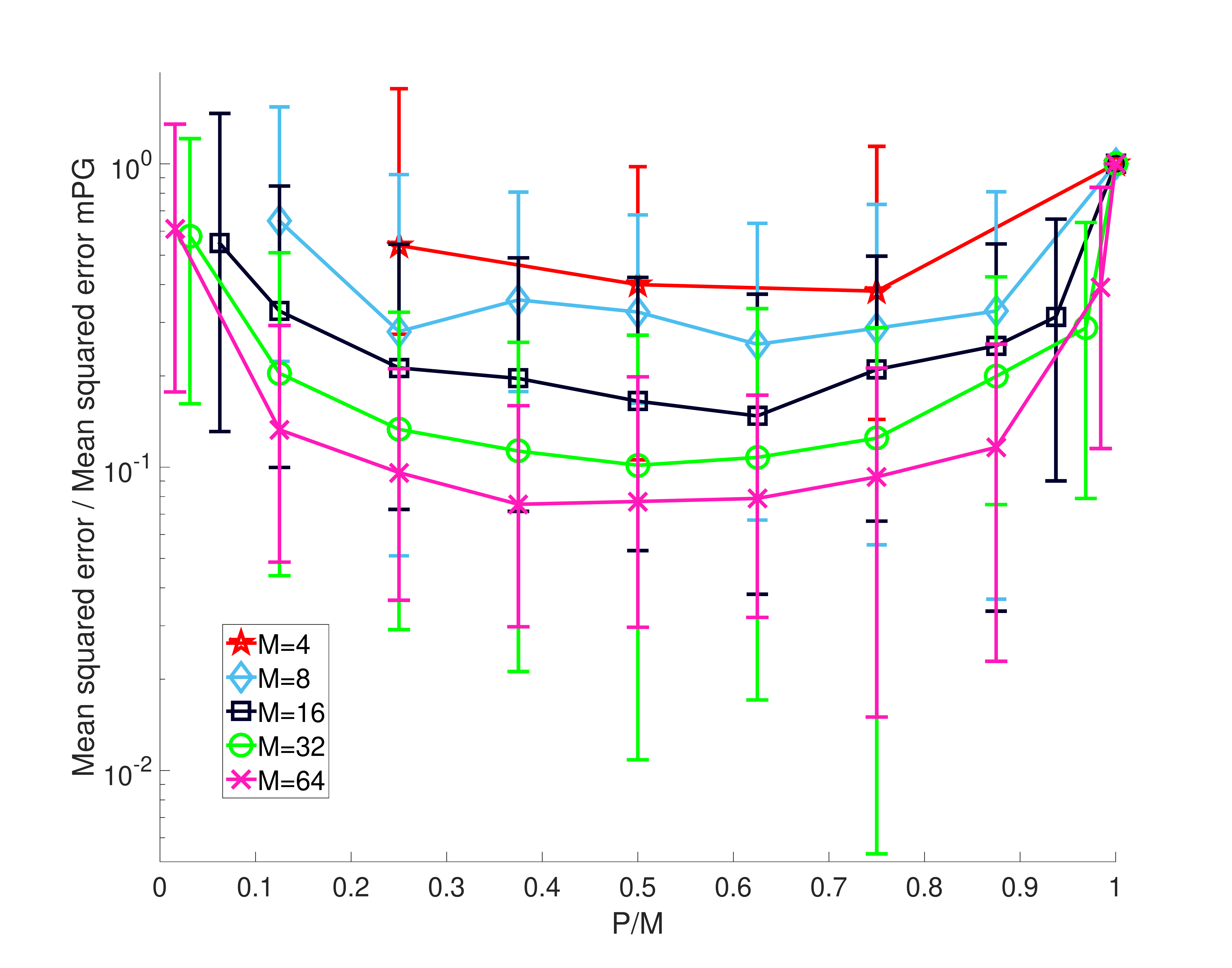}
		\caption{Mean}
		\label{fig:supp-meanConv}
	\end{subfigure}
	~~~ 
	\begin{subfigure}[t]{0.48\textwidth}
		\includegraphics[width=\textwidth,trim={4cm 0 6cm 0},clip]{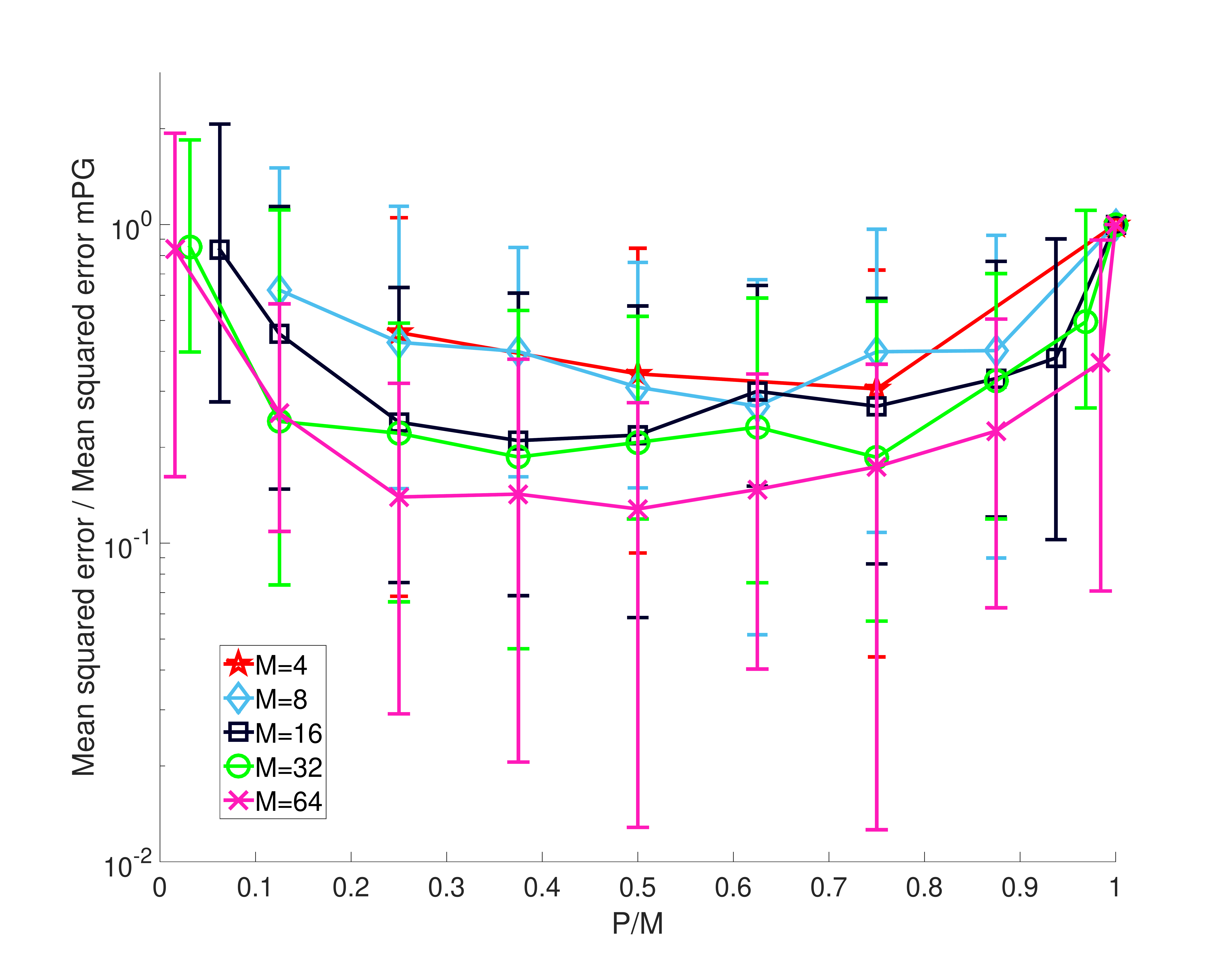}
		\caption{Standard Deviation}
		\label{fig:supp-meanPos}
	\end{subfigure}
	
	\begin{subfigure}[t]{0.48\textwidth}
		\includegraphics[width=\textwidth,trim={4cm 0 6cm 0},clip]{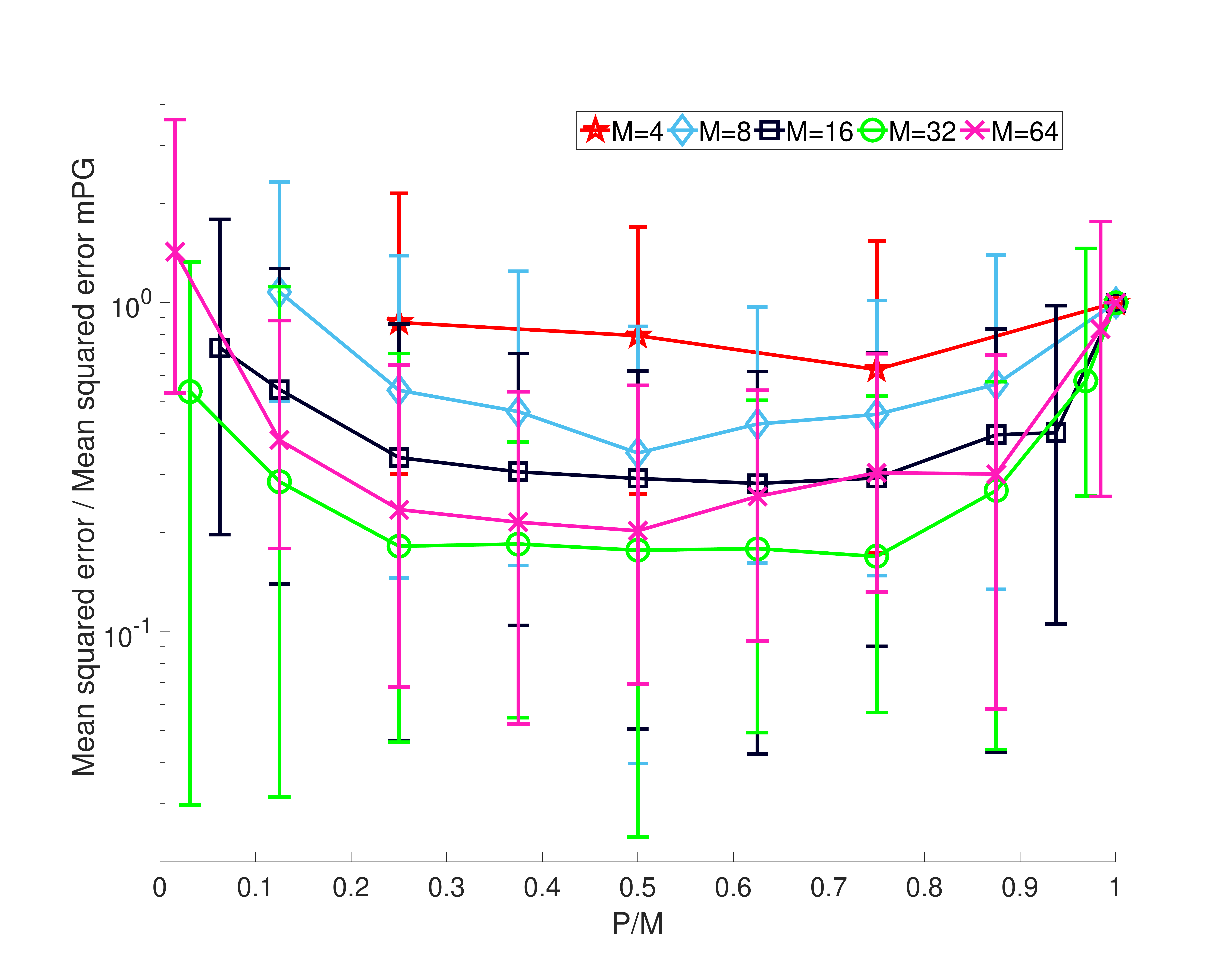}
		\caption{Skewness}
		\label{fig:supp-stdConv}
	\end{subfigure}
	~~~ 
	\begin{subfigure}[t]{0.48\textwidth}
		\includegraphics[width=\textwidth,trim={4cm 0 6cm 0},clip]{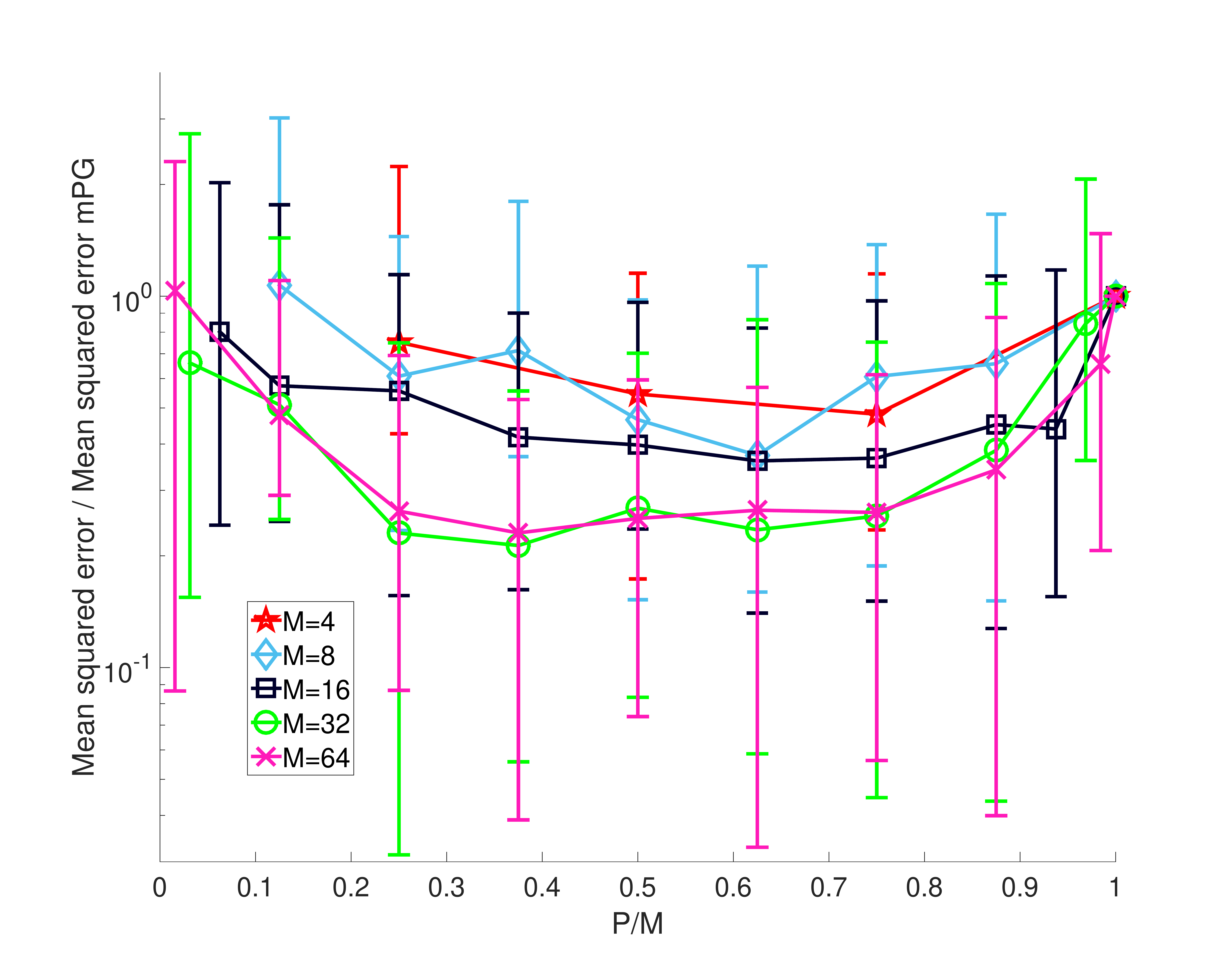}
		\caption{Kurtosis}
		\label{fig:supp-stdPos}
	\end{subfigure}	
	\caption{Median error in marginal moment estimates with different choices of P and M over 10 different synthetic datasets of the linear Gaussian state space model given in (10) after 1000 MCMC iterations. Errors are normalized by the error of a multi-start PG sampler which is a special case of iPMCMC for which $P=M$ (see Section 4).  Error bars show the lower and upper quartiles for the errors.  It can be seen that for all the moments then $P/M\approx1/2$ give the best performance. For the mean and standard deviation estimates, the accuracy relative to the non-interacting distribution case $P=M$ shows a clear increase with $M$. This effect is also seen for the skewness and excess kurtosis estimates except for the distinction between the $M=32$ and $M=64$ cases.  This may be because these metric are the same for the prior and the posterior such that good results for these metric might be achievable even when the samples give a poor match to the true posterior. \label{fig:supp-pSweepErrorBar}}
\end{figure*}

\begin{figure*}[p]
	\centering
	\begin{subfigure}[t]{0.49\textwidth}
		\makebox[\textwidth][r]{\includegraphics[width=1.1\textwidth,trim={4cm 0 5cm 0},clip]{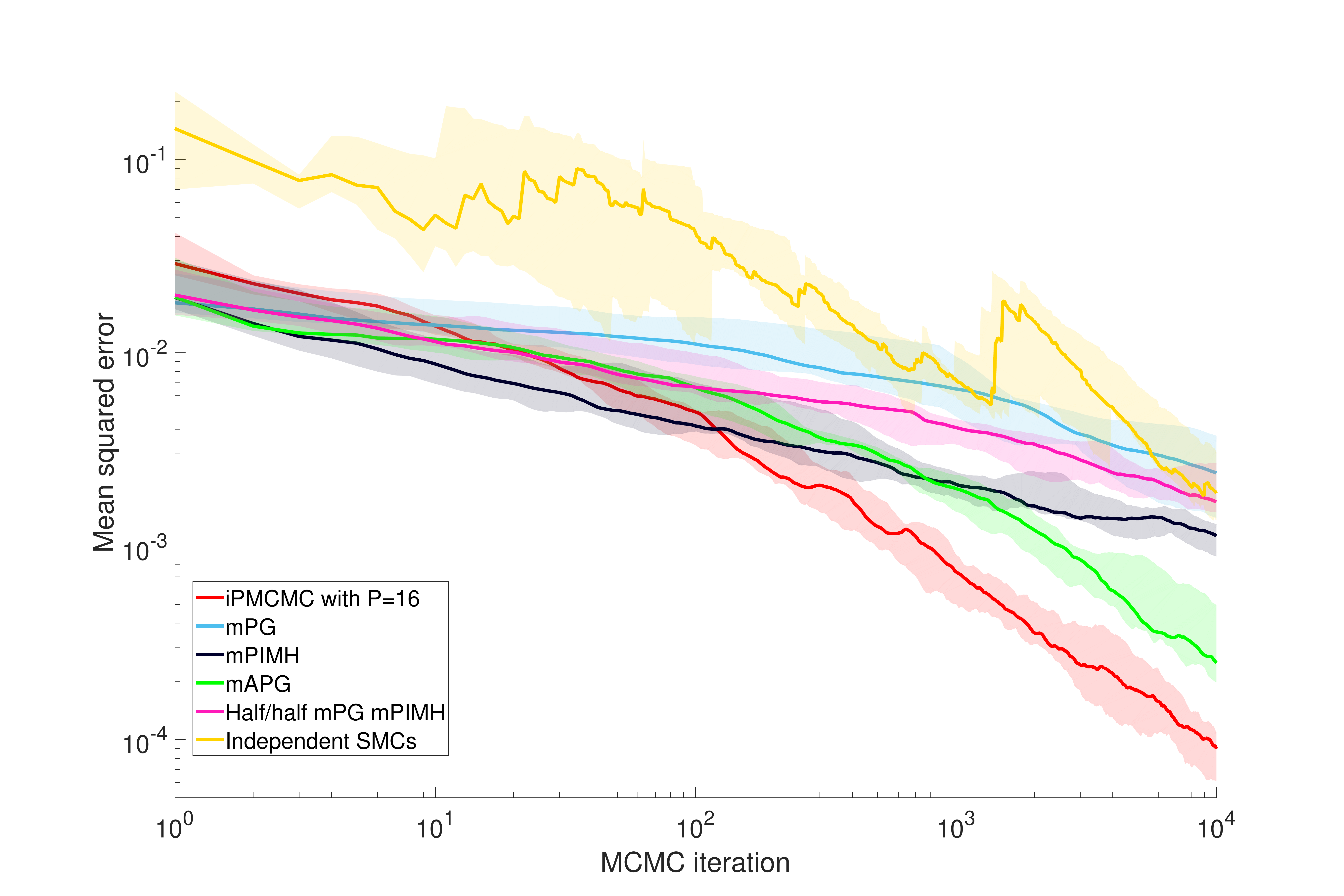}}
		\caption{Convergence in mean}
		\label{fig:supp-mean_dist_conv}
	\end{subfigure}
	~ 
	\begin{subfigure}[t]{0.49\textwidth}
		\makebox[\textwidth][l]{\includegraphics[width=1.1\textwidth,trim={4cm 0 5cm 0},clip]{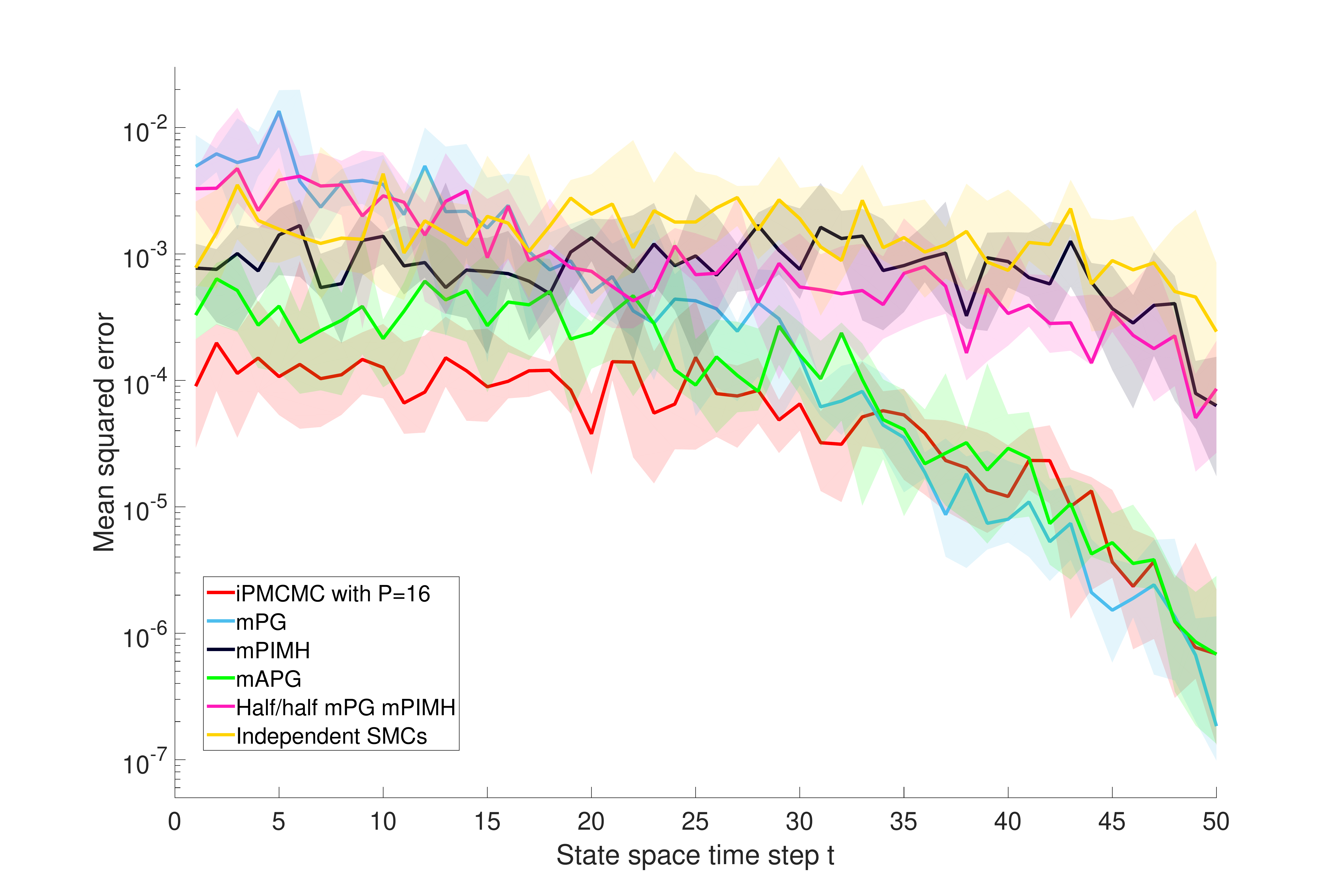}}
		\caption{Final error in mean}
		\label{fig:supp-mean_dist_pos}
	\end{subfigure}
	
	\begin{subfigure}[t]{0.49\textwidth}
		\makebox[\textwidth][r]{\includegraphics[width=1.1\textwidth,trim={4cm 0 5cm 0},clip]{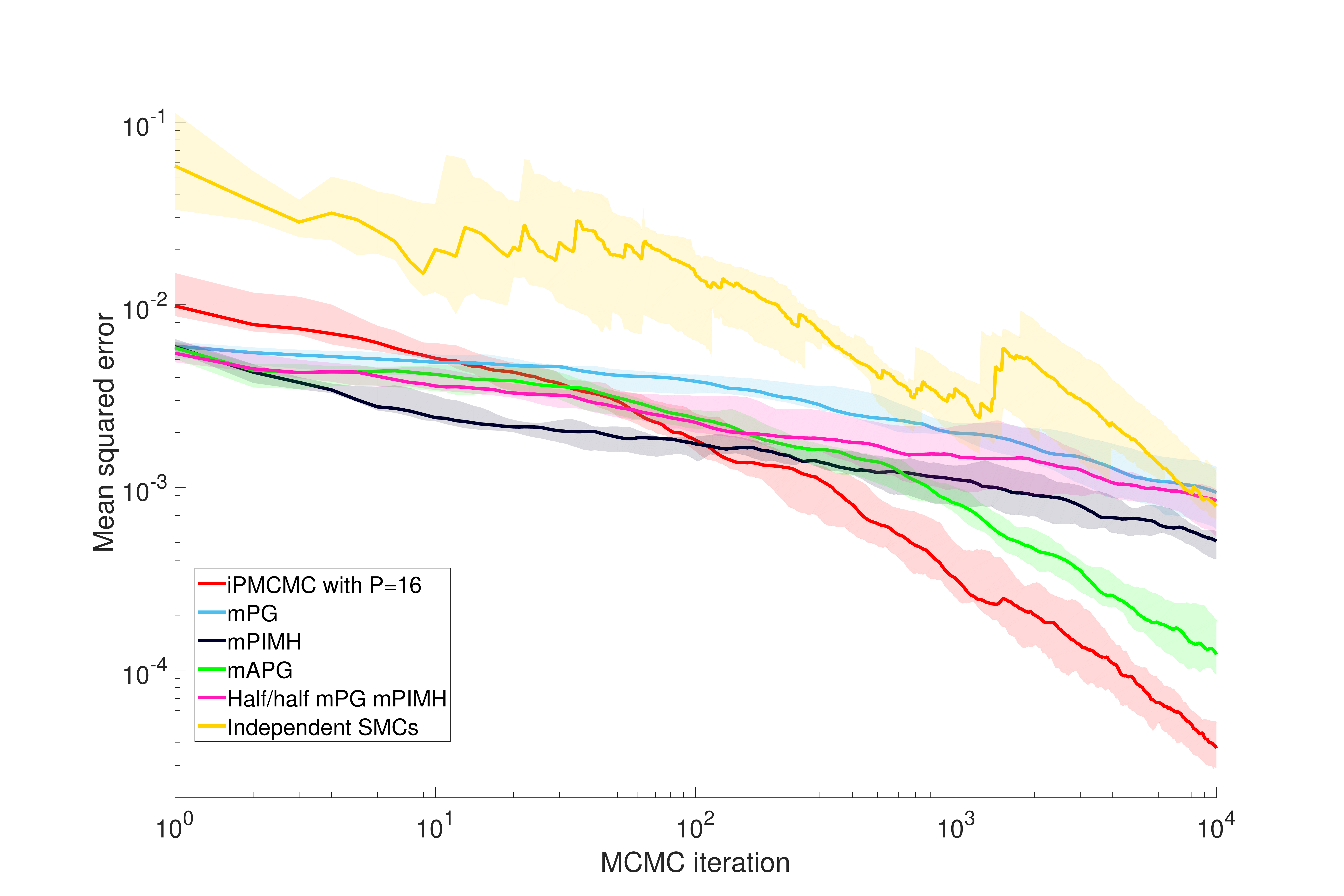}}
		\caption{Convergence in standard deviation}
		\label{fig:supp-std_dist_conv}
	\end{subfigure}
	~ 
	\begin{subfigure}[t]{0.49\textwidth}
		\makebox[\textwidth][l]{\includegraphics[width=1.1\textwidth,trim={4cm 0 5cm 0},clip]{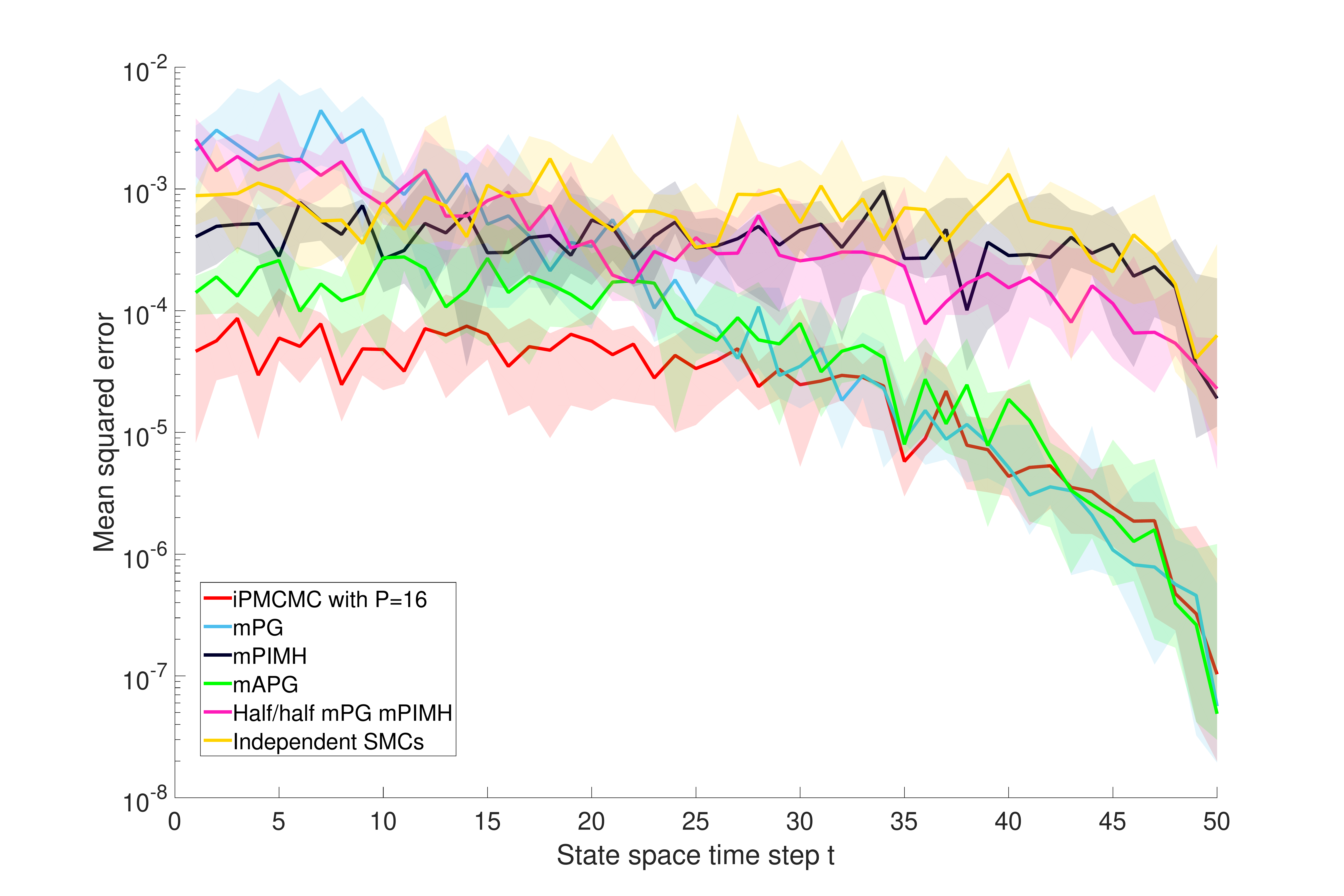}}
		\caption{Final error in standard deviation}
		\label{fig:supp-std_dist_pos}
	\end{subfigure}	
	\caption{Mean squared error in latent variable mean and standard deviation averaged over all dimensions of the LGSSM as a function of MCMC iteration (left) and position in the state sequence (right) for a selection of paraellelizable SMC and PMCMC methods.  See figure 3 in main paper for more details. \label{fig:supp-distributable_error_lss}}
\end{figure*}

\begin{figure*}[p]
	\centering
\begin{subfigure}[t]{0.49\textwidth}
	\makebox[\textwidth][r]{\includegraphics[width=1.1\textwidth,trim={4cm 0 5cm 0},clip]{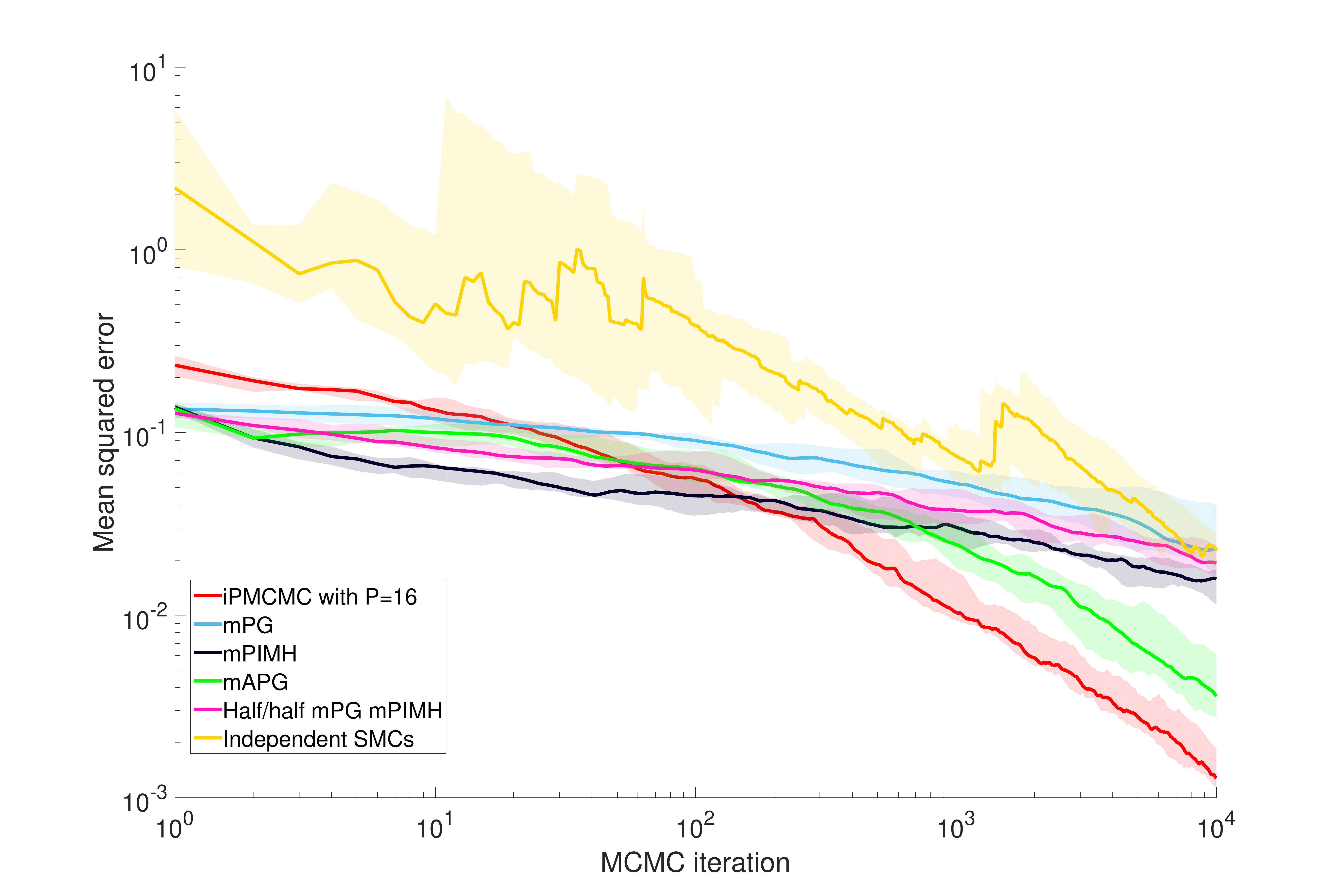}}
	\caption{Convergence in skewness}
	\label{fig:supp-ske_dist_conv}
\end{subfigure}
~ 
\begin{subfigure}[t]{0.49\textwidth}
	\makebox[\textwidth][l]{\includegraphics[width=1.1\textwidth,trim={4cm 0 5cm 0},clip]{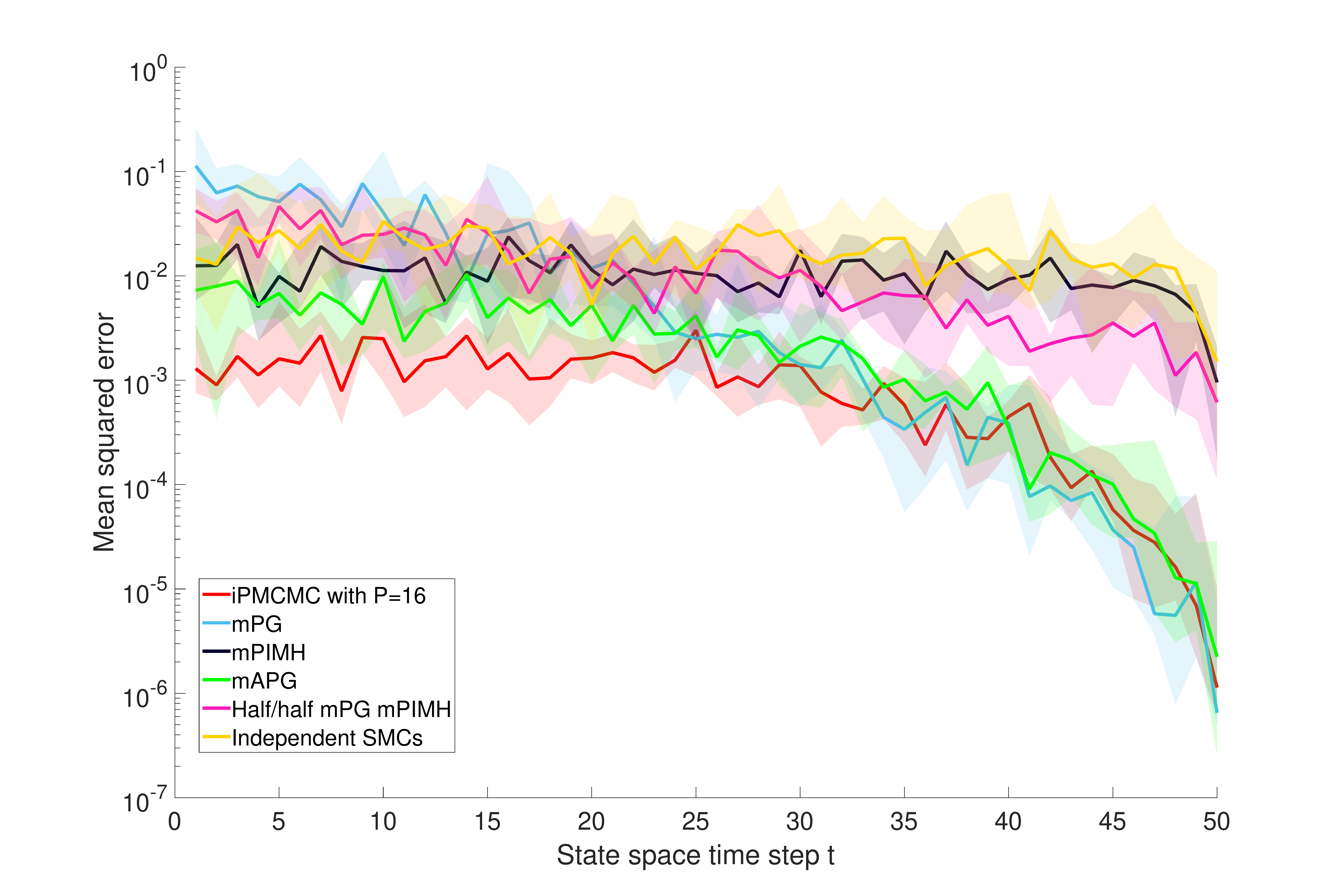}}
	\caption{Final error in skewness}
	\label{fig:supp-ske_dist_pos}
\end{subfigure}	

\begin{subfigure}[t]{0.49\textwidth}
	\makebox[\textwidth][r]{\includegraphics[width=1.1\textwidth,trim={4cm 0 5cm 0},clip]{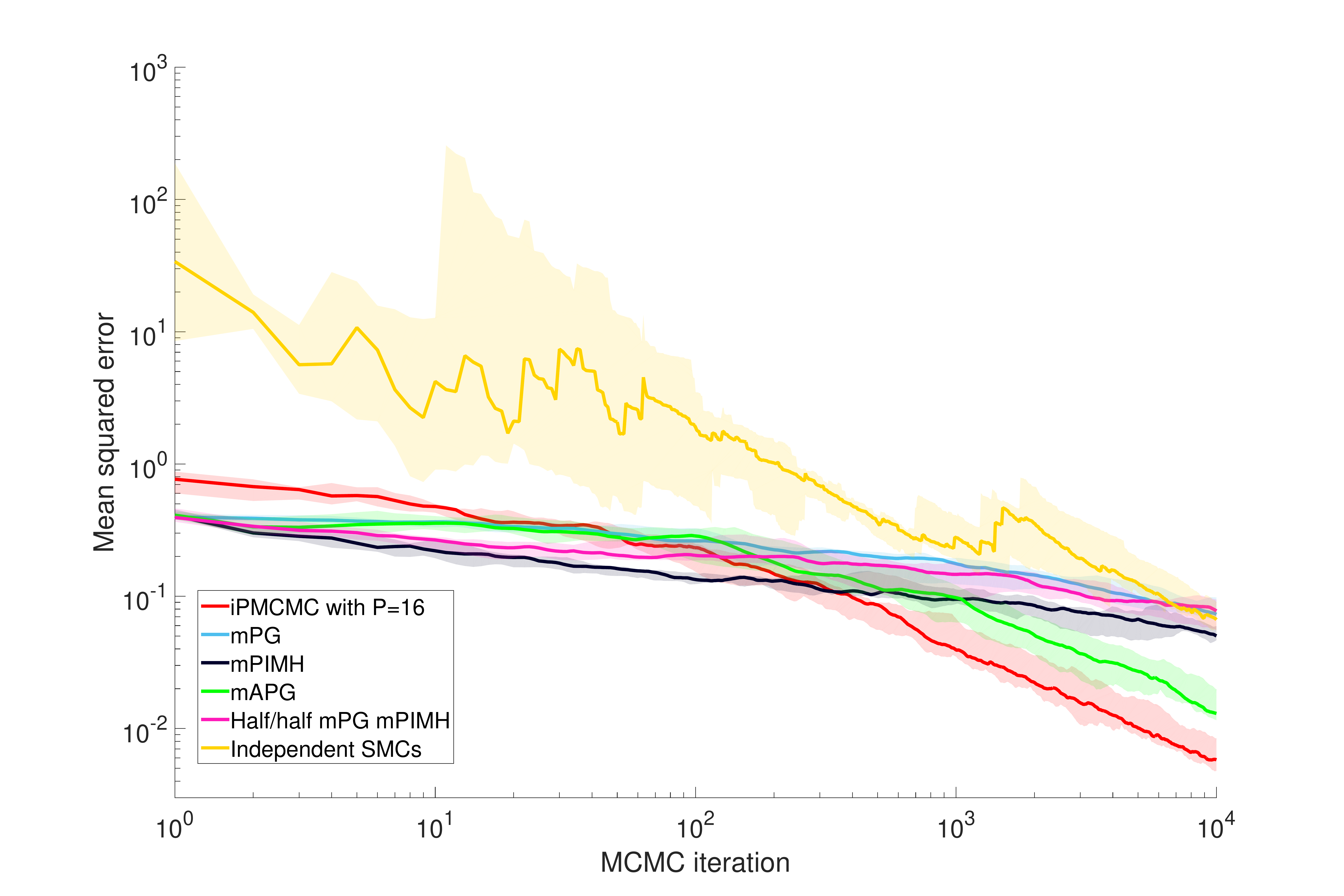}}
	\caption{Convergence in kurtosis}
	\label{fig:supp-kurt_dist_conv}
\end{subfigure}
~ 
\begin{subfigure}[t]{0.49\textwidth}
	\makebox[\textwidth][l]{\includegraphics[width=1.1\textwidth,trim={4cm 0 5cm 0},clip]{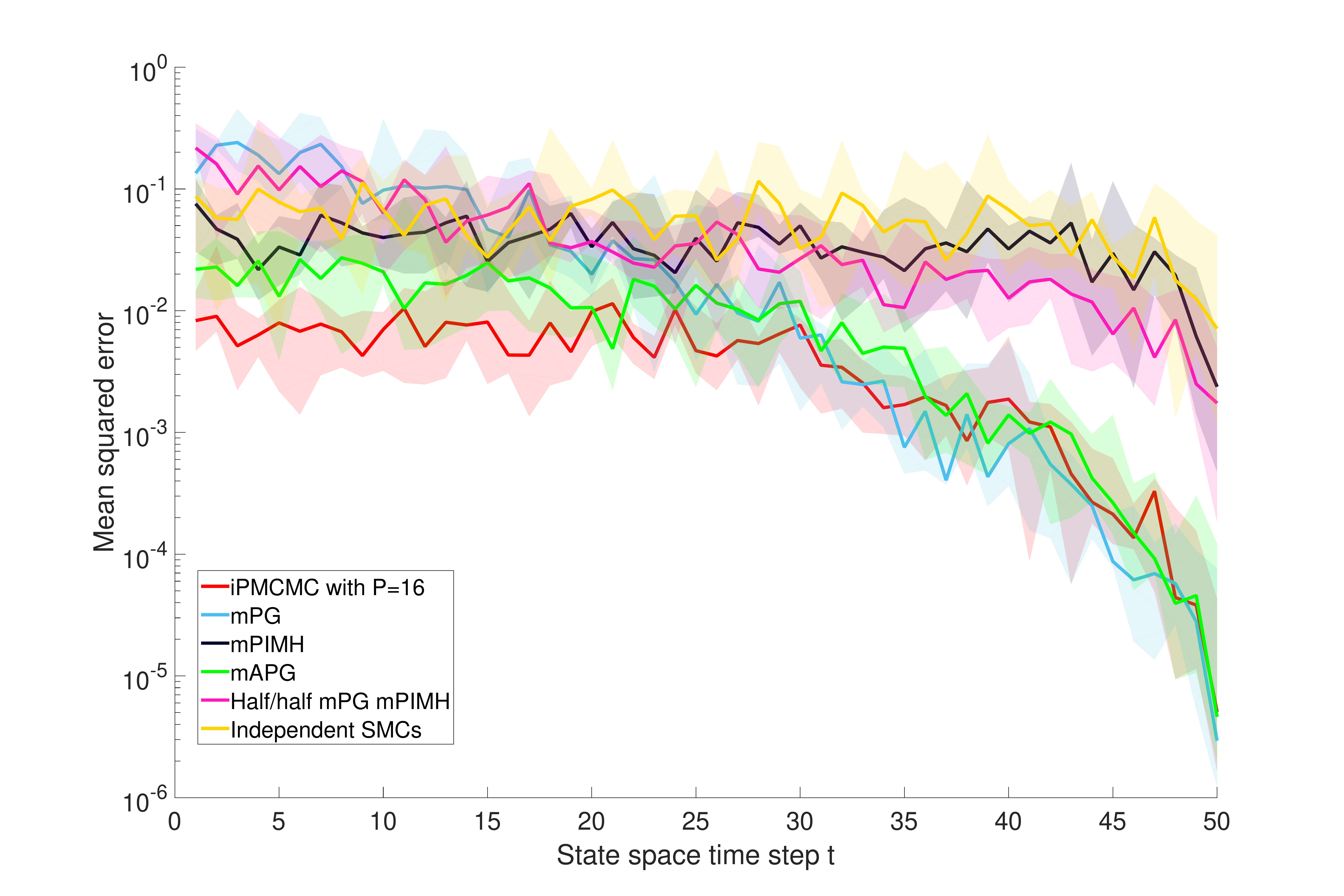}}
	\caption{Final error in kurtosis}
	\label{fig:supp-kurt_dist_pos}
\end{subfigure}	
	\caption{Mean squared error in latent variable skewness and kurtosis averaged over all dimensions of the LGSSM as a function of MCMC iteration (left) and position in the state sequence (right) for a selection of paraellelizable SMC and PMCMC methods.  See figure 3 in main paper for more details. \label{fig:supp-distributable_error_lss2}}
\end{figure*}

\begin{figure*}[p]
	\centering
	\begin{subfigure}[t]{0.49\textwidth}
		\makebox[\textwidth][r]{\includegraphics[width=1.1\textwidth,trim={2cm 0.7cm 2cm 0},clip]{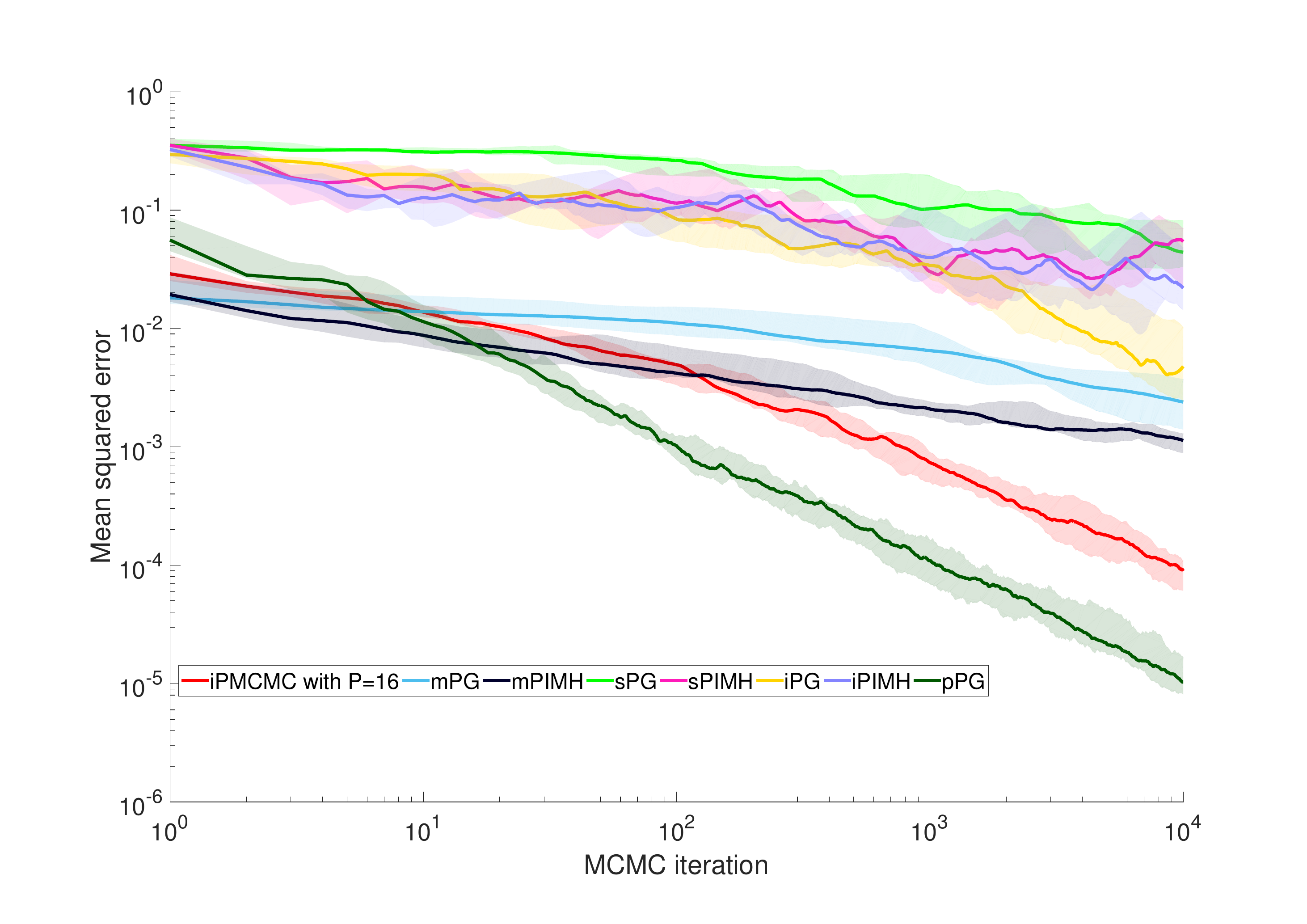}}
		\caption{Convergence in mean}
		\label{fig:supp-mean_series_conv}
	\end{subfigure}
	~ 
	\begin{subfigure}[t]{0.49\textwidth}
		\makebox[\textwidth][l]{\includegraphics[width=1.1\textwidth,trim={4cm 0 5cm 0},clip]{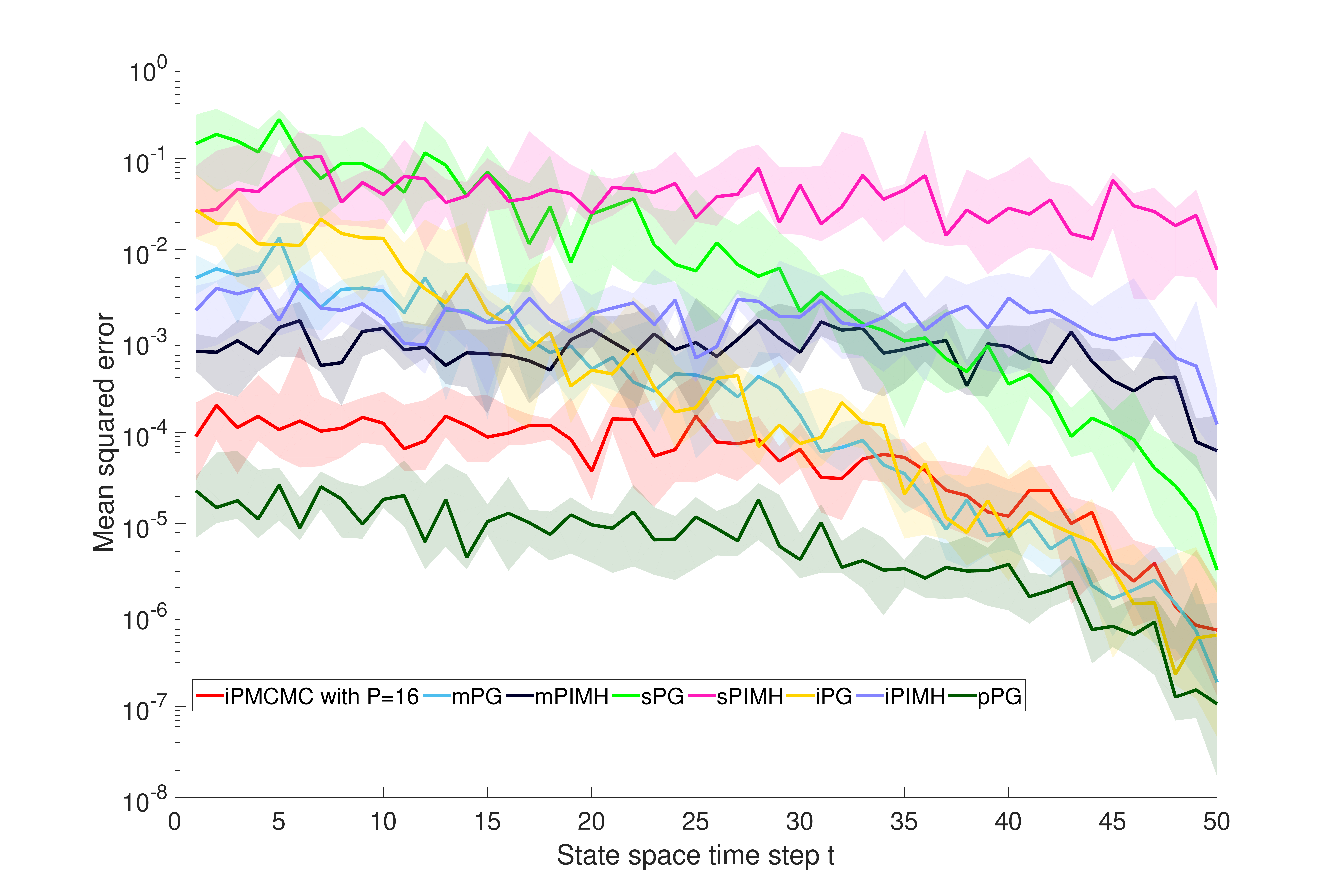}}
		\caption{Final error in mean}
		\label{fig:supp-mean_series_pos}
	\end{subfigure}
	
	\begin{subfigure}[t]{0.49\textwidth}
		\makebox[\textwidth][r]{\includegraphics[width=1.1\textwidth,trim={2cm 0.7cm 2cm 0},clip]{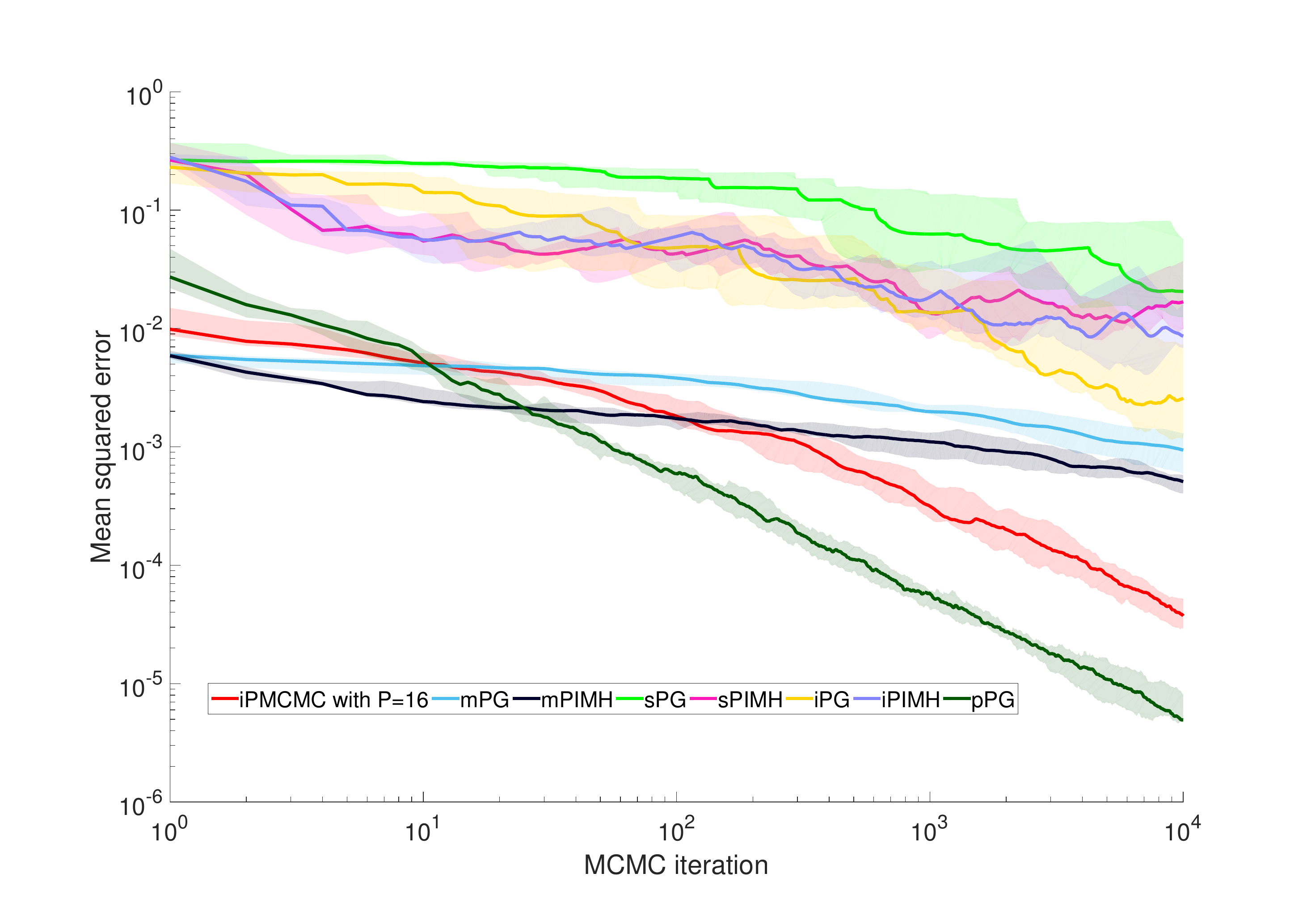}}
		\caption{Convergence in standard deviation}
		\label{fig:supp-std_series_conv}
	\end{subfigure}
	~ 
	\begin{subfigure}[t]{0.49\textwidth}
		\makebox[\textwidth][l]{\includegraphics[width=1.1\textwidth,trim={4cm 0 5cm 0},clip]{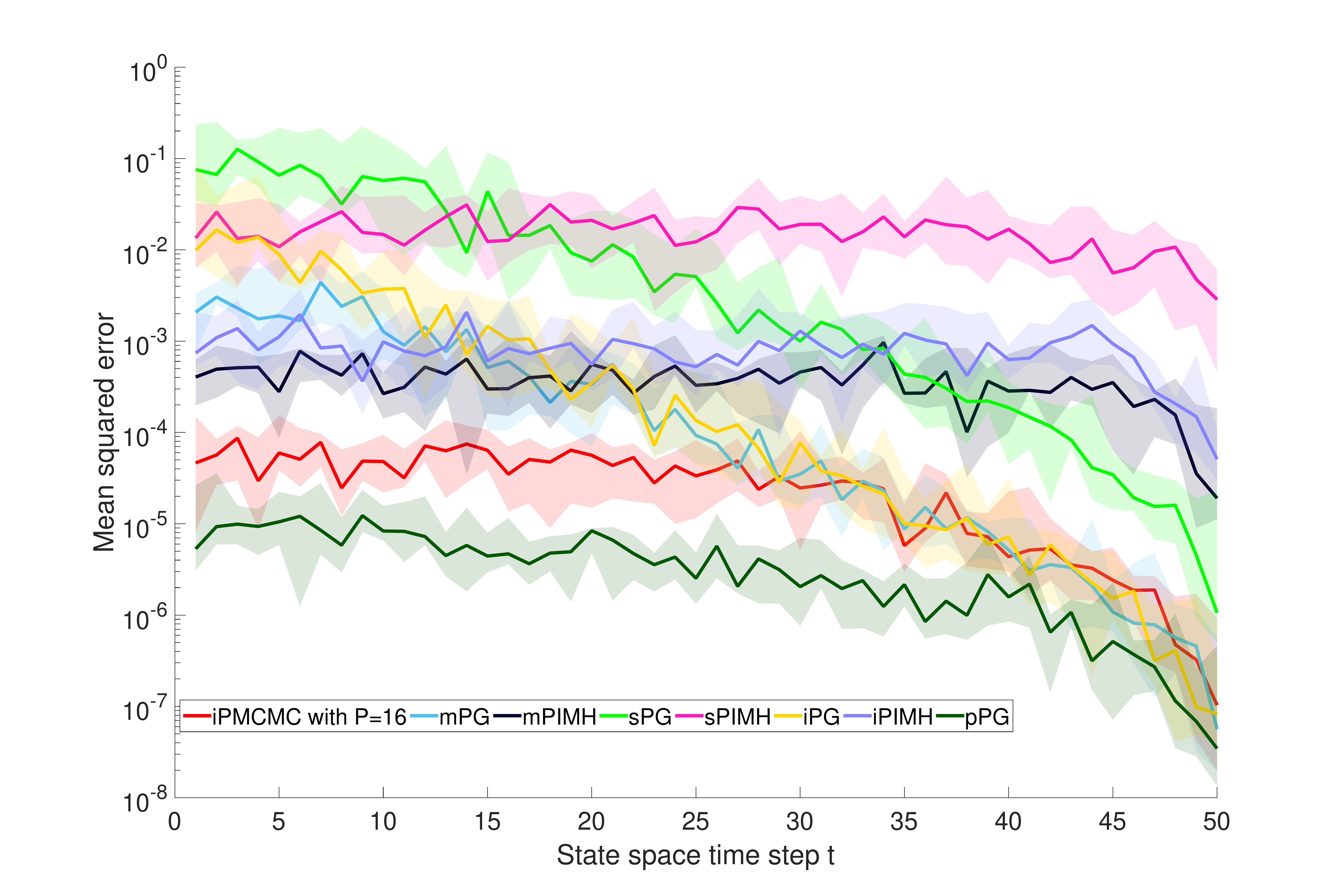}}
		\caption{Final error in standard deviation}
		\label{fig:supp-std_series_pos}
	\end{subfigure}	
	
	\caption{Mean squared error in latent variable mean and standard deviation averaged of all dimensions of the LGSSM as a function of MCMC iteration (left) and position in the state sequence (right) for iPMCMC, mPG, mPIMH and a number of serialized variants.  Key for legends: sPG = single PG chain, sPIMH = single PIMH chain, iPG = single PG chain run 32 times longer, iPIMH = single PIMH chain run 32 times longer and pPG = single PG with 32 times more particles.  For visualization purposes, the chains with extra iterations have had the number of MCMC iterations normalized by 32 so that the different methods represent equivalent total computational budget. \label{fig:supp-series_error_lss}}
\end{figure*}

\begin{figure*}[p]
	\centering
	\begin{subfigure}[t]{0.49\textwidth}
		\makebox[\textwidth][r]{\includegraphics[width=1.1\textwidth,trim={4cm 0 5cm 0},clip]{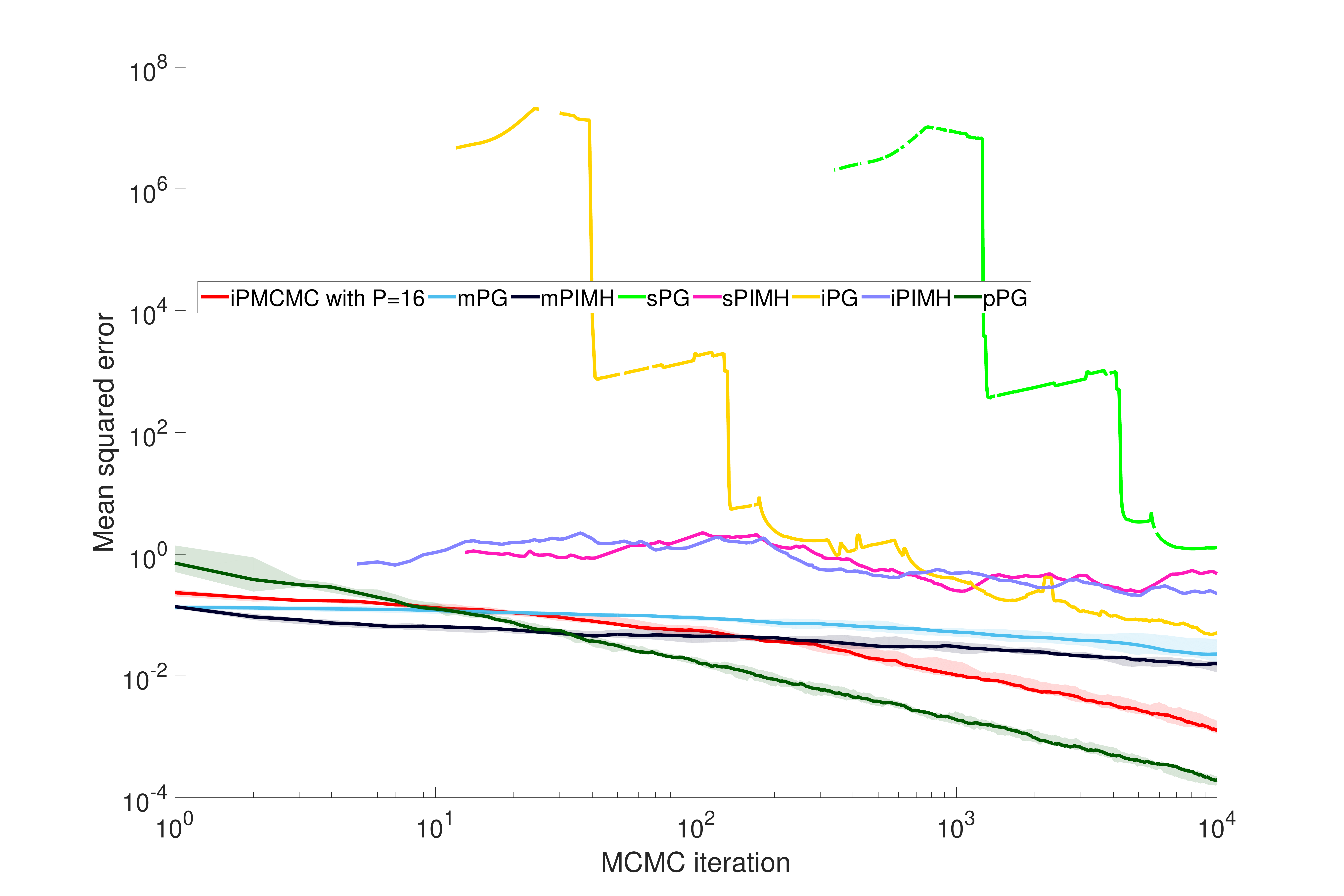}}
		\caption{Convergence in skewness}
		\label{fig:supp-ske_series_conv}
	\end{subfigure}
	~ 
	\begin{subfigure}[t]{0.49\textwidth}
		\makebox[\textwidth][l]{\includegraphics[width=1.1\textwidth,trim={4cm 0 5cm 0},clip]{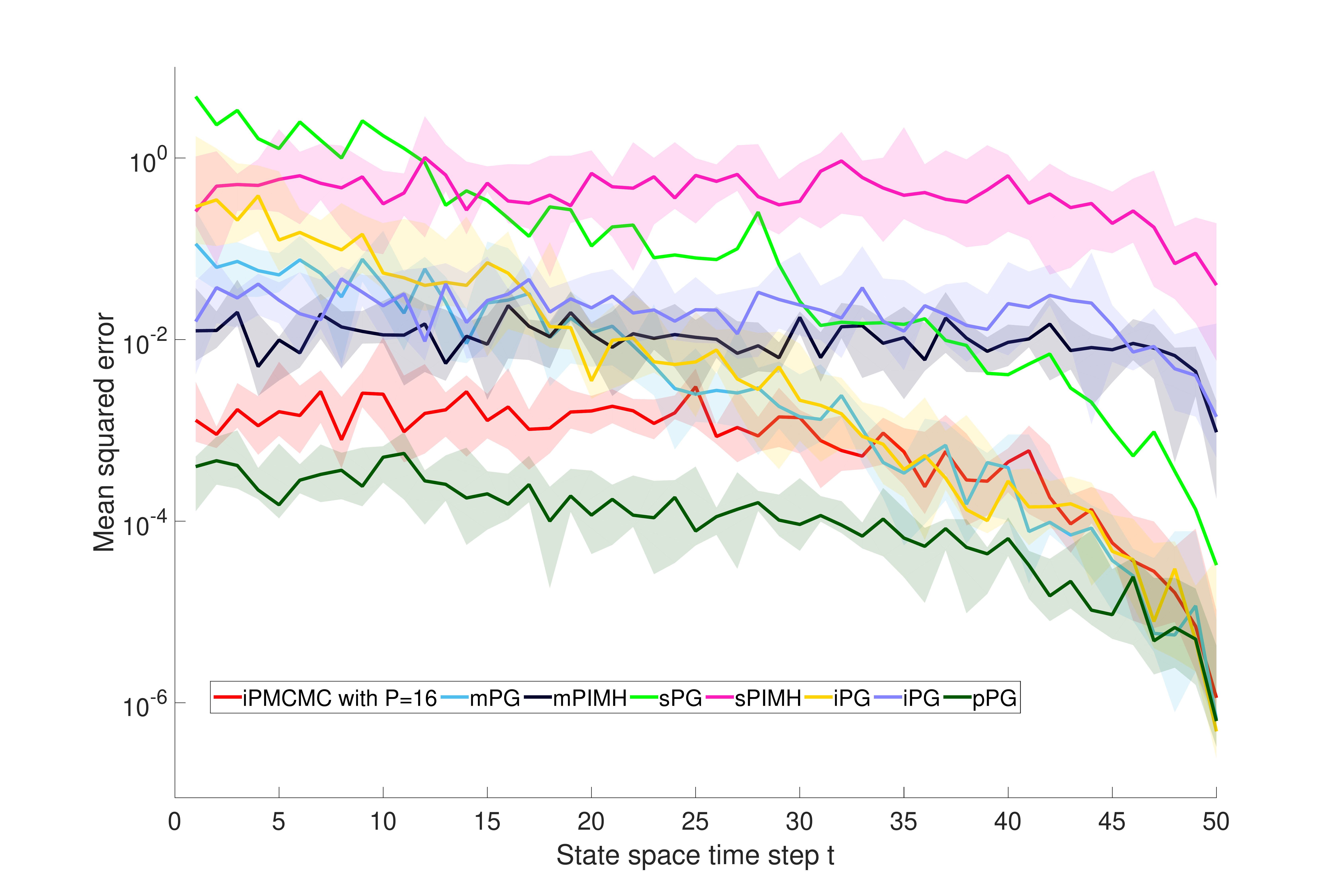}}
		\caption{Final error in skewness}
		\label{fig:supp-ske_series_pos}
	\end{subfigure}	
	
	\begin{subfigure}[t]{0.49\textwidth}
		\makebox[\textwidth][r]{\includegraphics[width=1.1\textwidth,trim={4cm 0 5cm 0},clip]{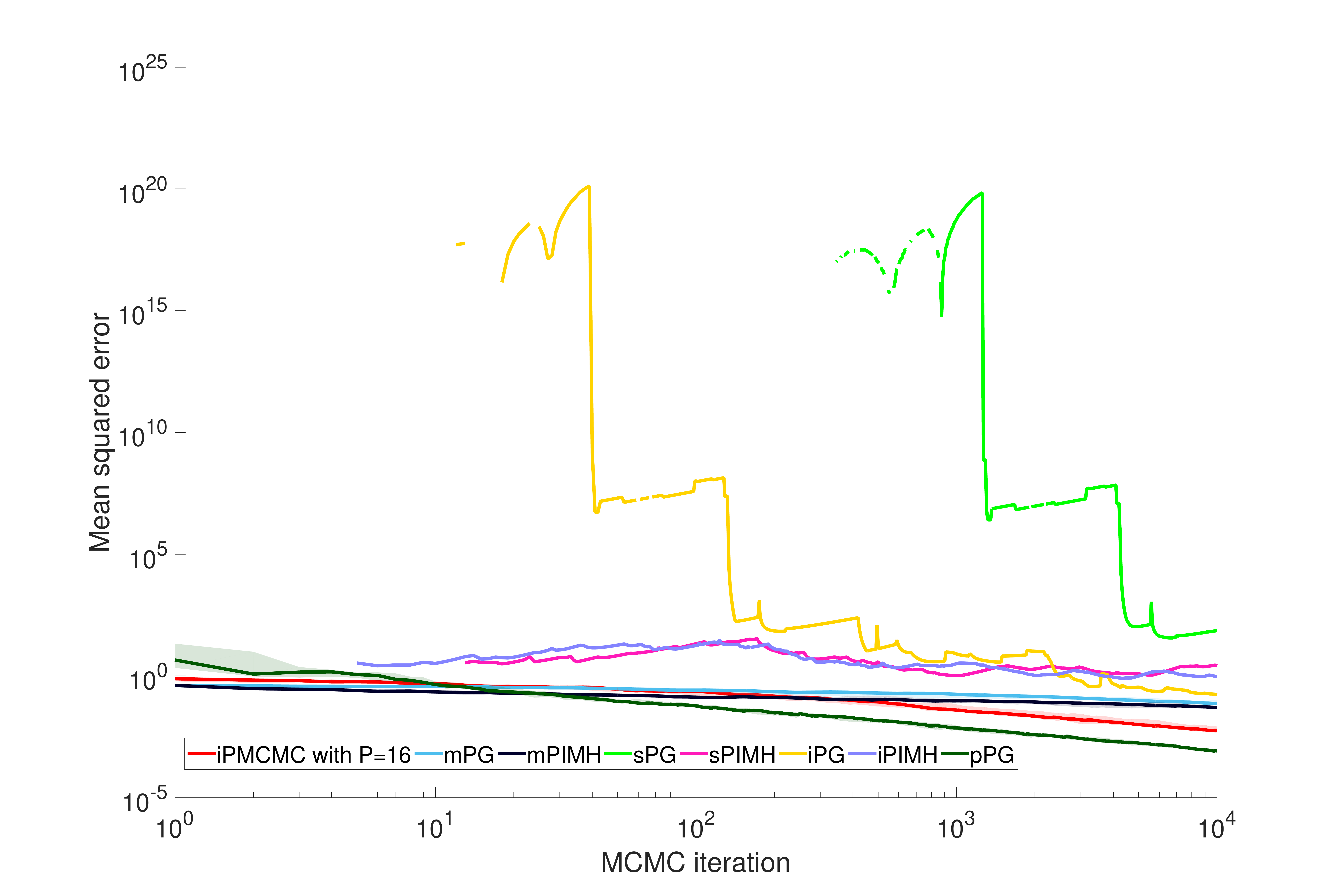}}
		\caption{Convergence in kurtosis}
		\label{fig:supp-kurt_series_conv}
	\end{subfigure}
	~ 
	\begin{subfigure}[t]{0.49\textwidth}
		\makebox[\textwidth][l]{\includegraphics[width=1.1\textwidth,trim={4cm 0 5cm 0},clip]{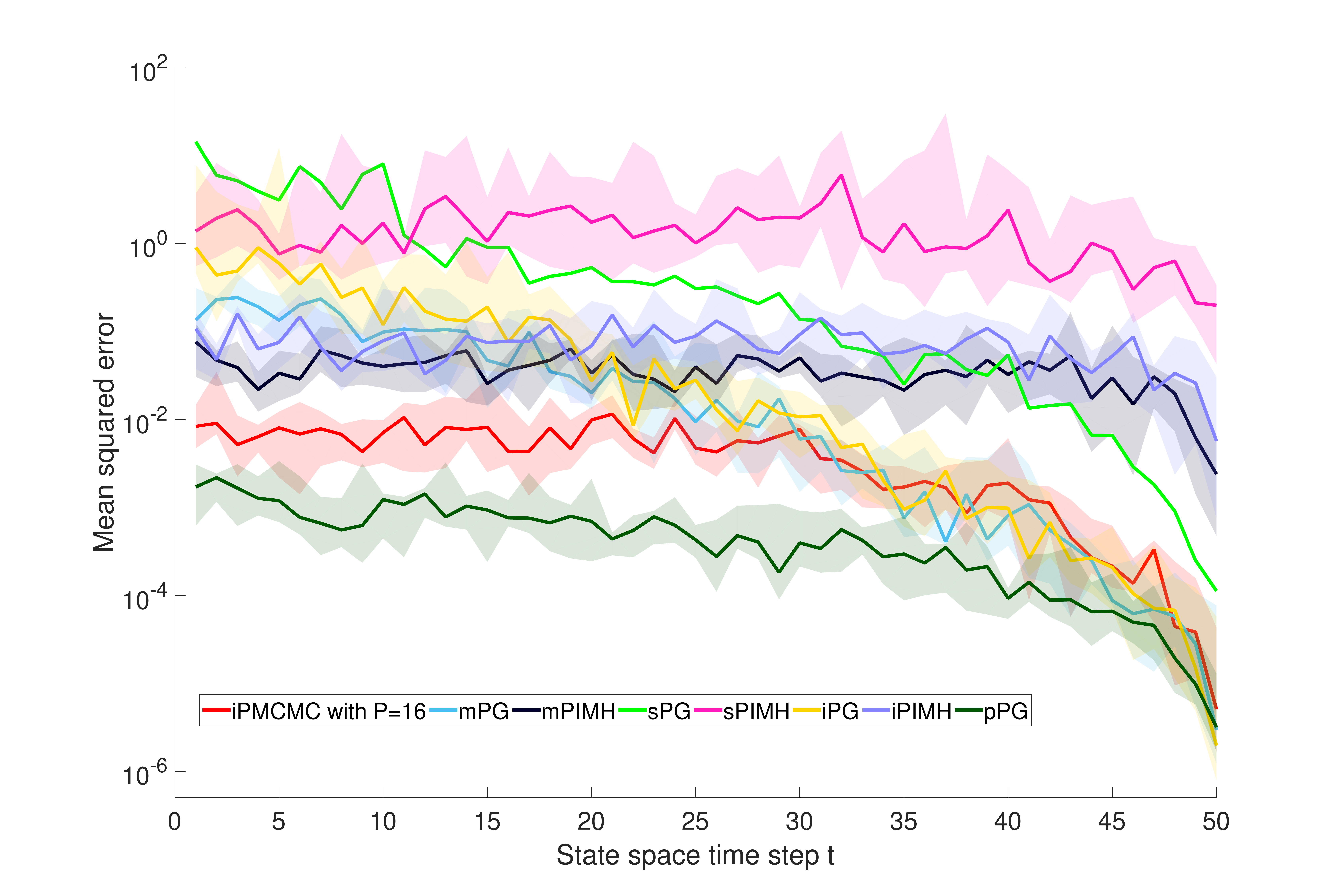}}
		\caption{Final error in kurtosis}
		\label{fig:supp-kurt_series_pos}
	\end{subfigure}

	\caption{Mean squared error in latent variable skewness and kurtosis averaged of all dimensions of the LGSSM as a function of MCMC iteration (left) and position in the state sequence (right) for iPMCMC, mPG, mPIMH and a number of serialized variants.  Key for legends: sPG = single PG chain, sPIMH = single PIMH chain, iPG = single PG chain run 32 times longer, iPIMH = single PIMH chain run 32 times longer and pPG = single PG with 32 times more particles.  For visualization purposes, the chains with extra iterations have had the number of MCMC iterations normalized by 32 so that the different methods represent equivalent total computational budget. \label{fig:supp-series_error_lss2}}
\end{figure*}

\begin{figure*}[p]
	\centering
	\begin{subfigure}[t]{0.49\textwidth}
		\makebox[\textwidth][r]{\includegraphics[width=1.1\textwidth,trim={4cm 0 5cm 0},clip]{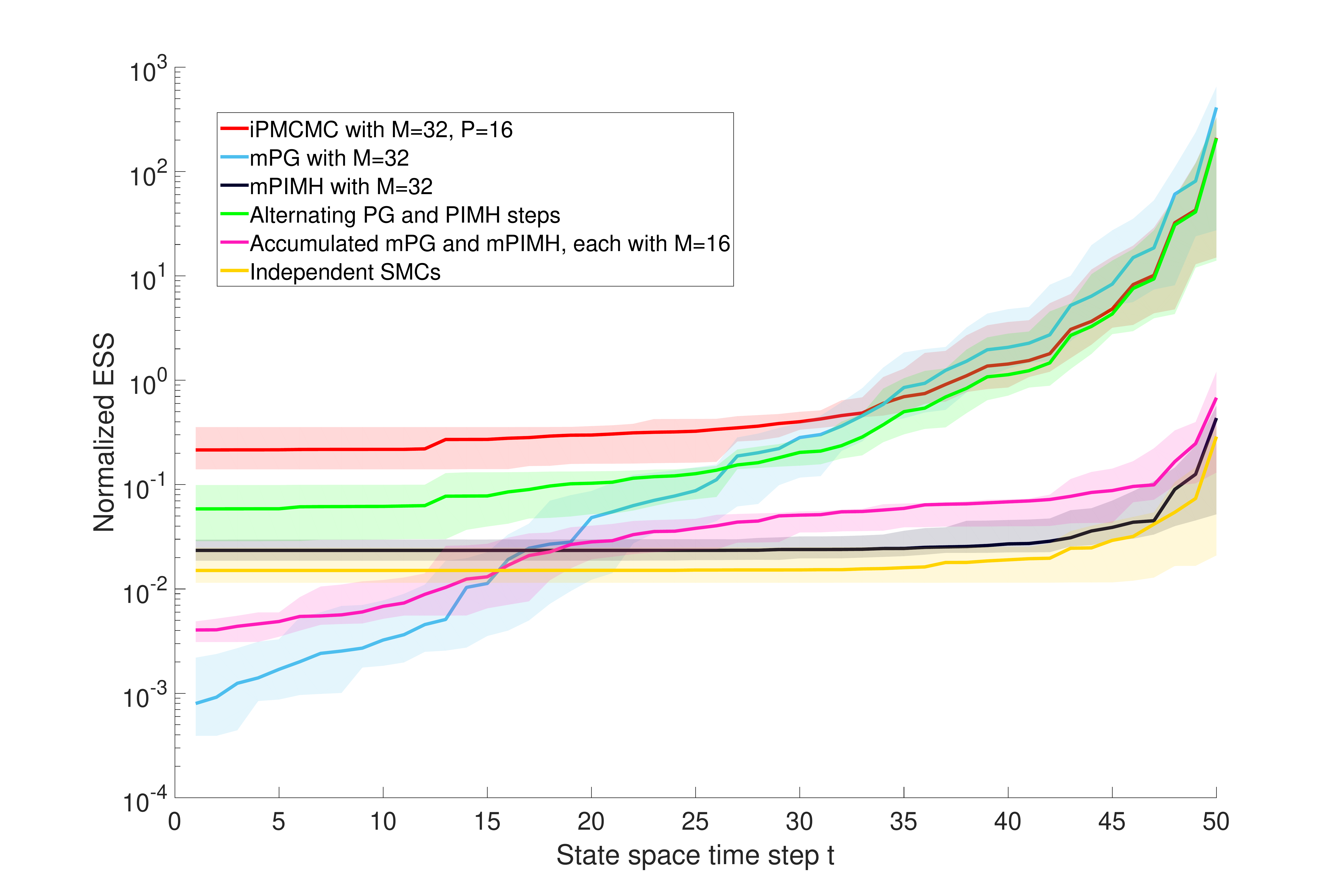}}
		\caption{ESS of distributed methods for LGSSM}
		\label{fig:supp-ESS-lss-dist}
	\end{subfigure}
	~ 
	\begin{subfigure}[t]{0.49\textwidth}
		\makebox[\textwidth][l]{\includegraphics[width=1.1\textwidth,trim={4cm 0 5cm 0},clip]{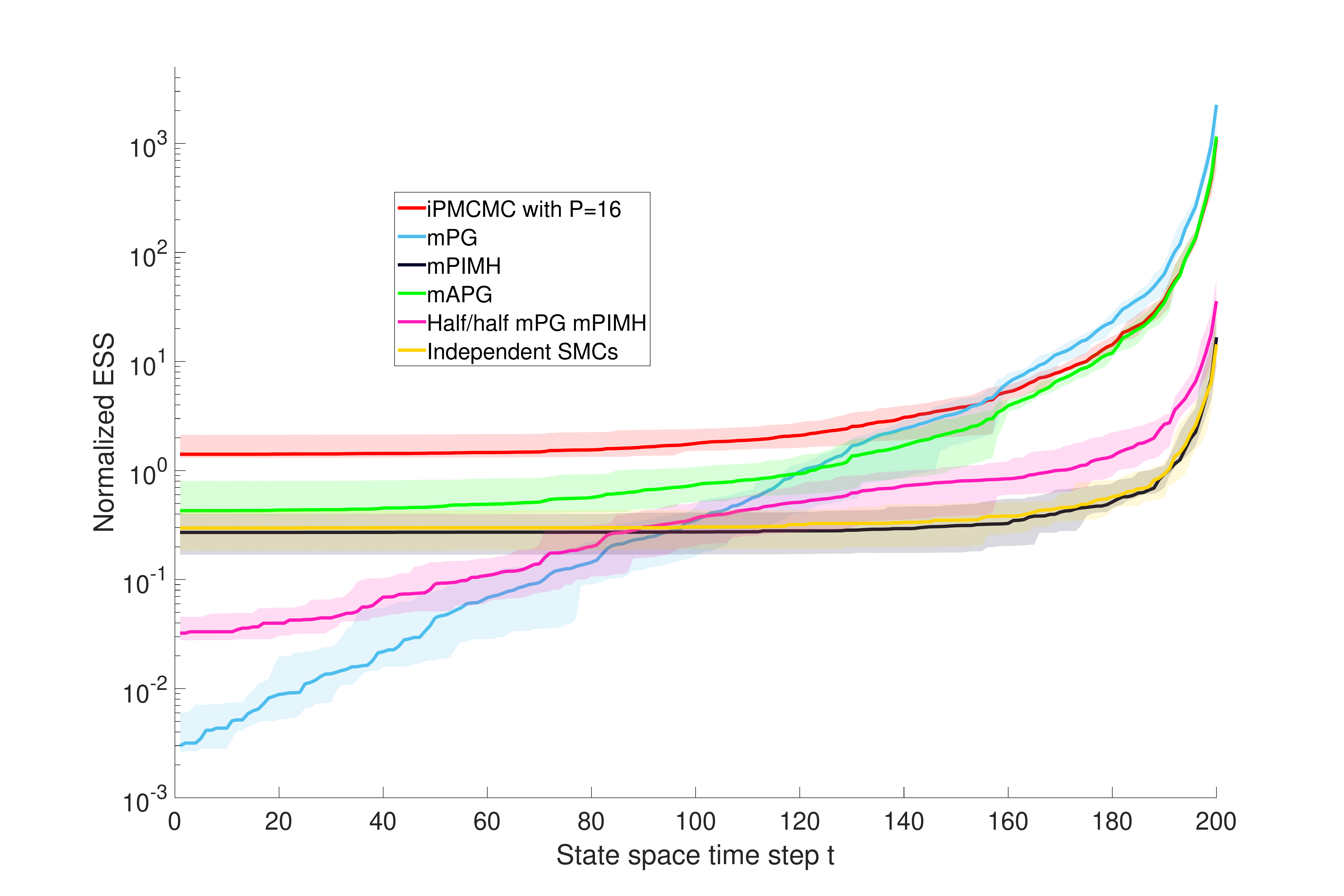}}
		\caption{ESS of distributed methods for NLSSM}
		\label{fig:supp-ESS-nlss-dist}
	\end{subfigure}
	
	\begin{subfigure}[t]{0.49\textwidth}
		\makebox[\textwidth][r]{\includegraphics[width=1.1\textwidth,trim={4cm 0 5cm 0},clip]{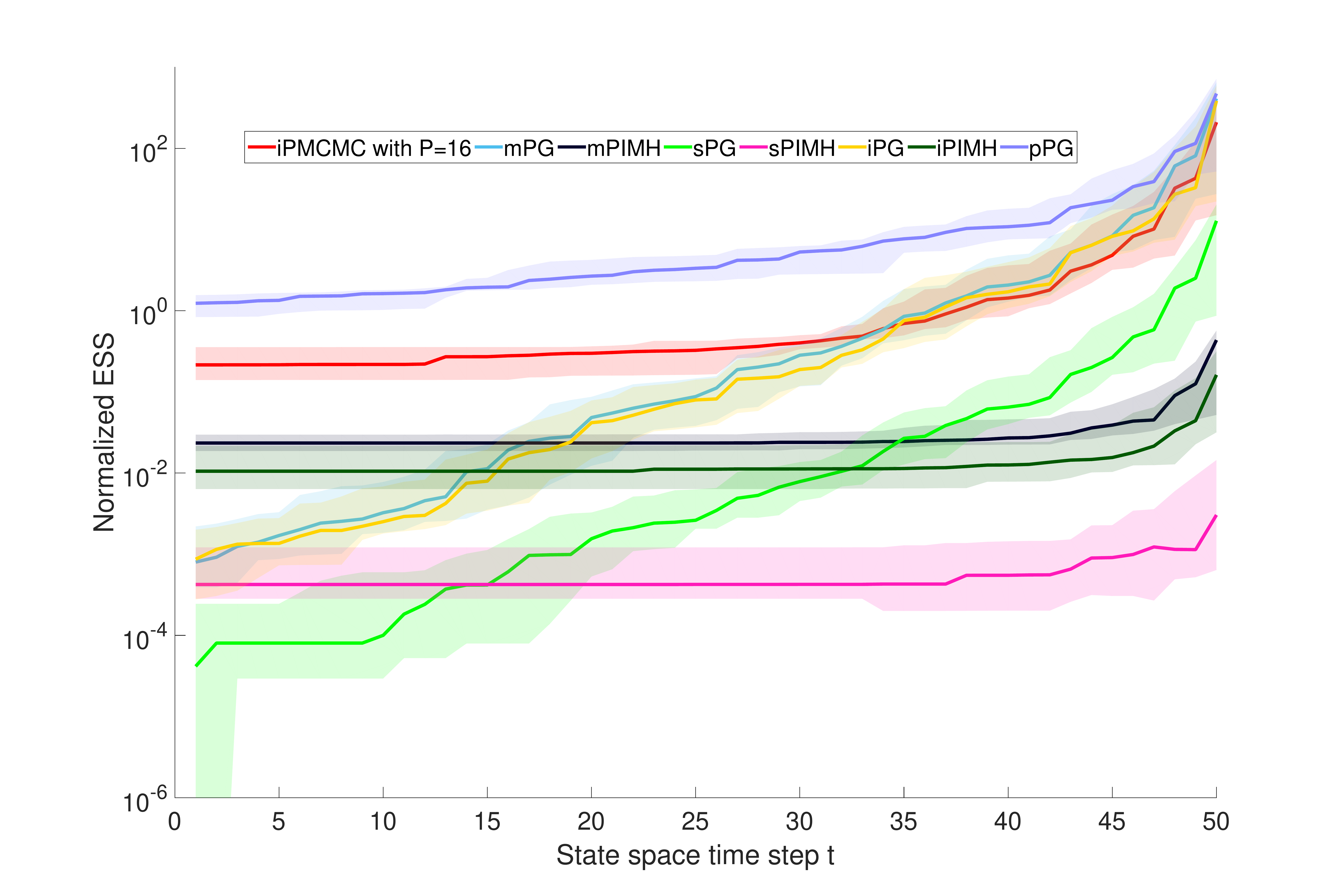}}
		\caption{ESS comparison to series equivalents for LGSSM}
		\label{fig:supp-ESS-lss-series}
	\end{subfigure}
	~ 
	\begin{subfigure}[t]{0.49\textwidth}
		\makebox[\textwidth][l]{\includegraphics[width=1.1\textwidth,trim={4cm 0 5cm 0},clip]{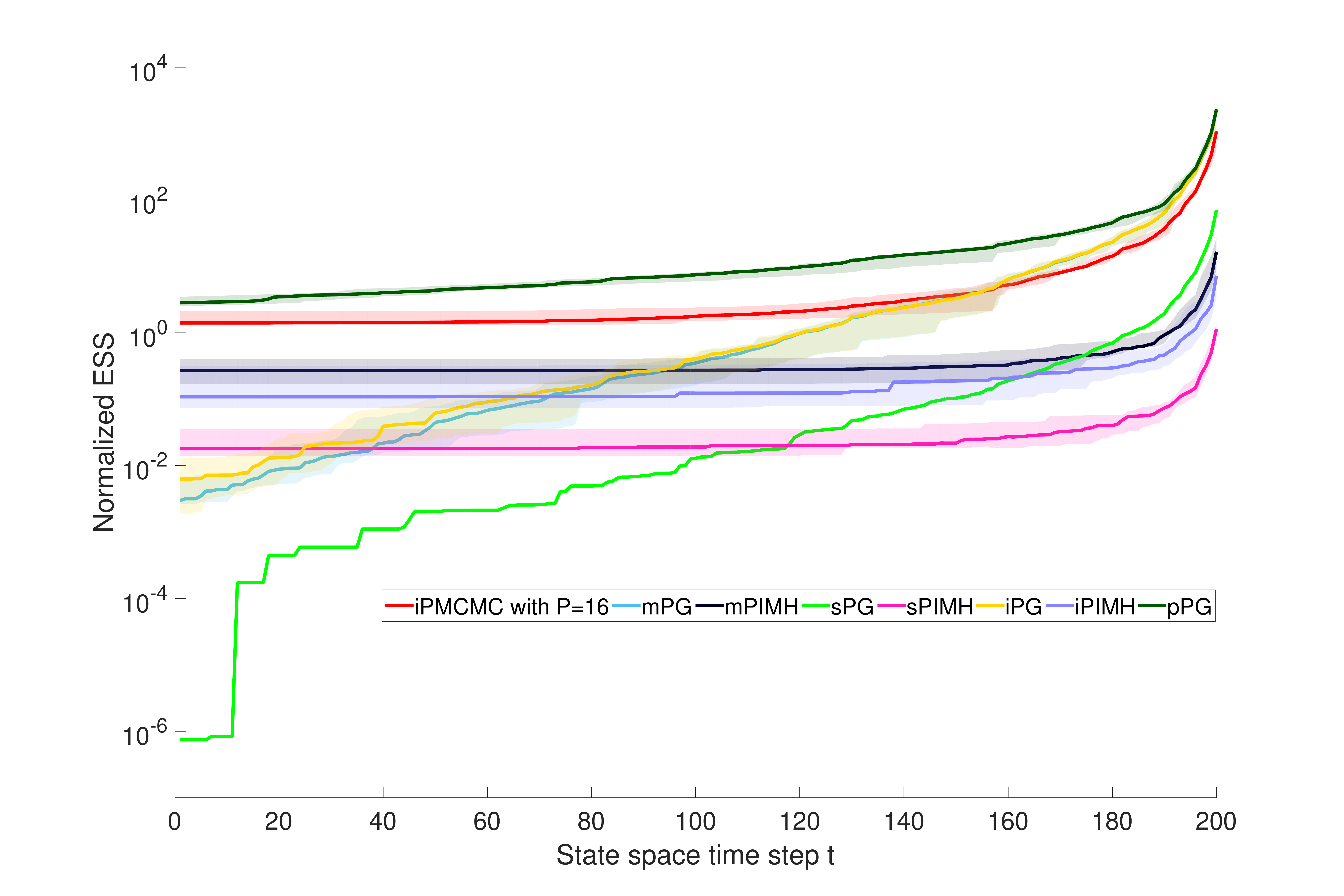}}
		\caption{ESS comparison to series equivalents for NLSSM}
		\label{fig:supp-ESS-nlss-series}
	\end{subfigure}	
	
	\caption{Normalized effective sample size for LGSSM (left) and NLSSM (right) for a number of distributed and series models.  Key for legends: sPG = single PG chain, sPIMH = single PIMH chain, iPG = single PG chain run 32 times longer, iPIMH = single PIMH chain run 32 times longer and pPG = single PG with 32 times more particles. \label{fig:supp-ess_extras}}
\end{figure*}

\clearpage
\bibliography{refs}

\end{document}